\documentclass[journal]{IEEEtran}

\usepackage[T1]{fontenc}
\usepackage[utf8]{inputenc}
\usepackage{etoolbox}
\usepackage{microtype}
\usepackage{amsmath,amssymb,amsthm,mathtools,bm}
\usepackage[nocompress]{cite}
\usepackage{graphicx}
\usepackage{xcolor}
\usepackage{subfigure}
\usepackage{hyperref}
\usepackage[nameinlink,noabbrev]{cleveref}
\hypersetup{colorlinks=true,linkcolor=black,citecolor=black,urlcolor=black}

\makeatletter
\def\endthebibliography{%
  \def\@noitemerr{\@latex@warning{Empty `thebibliography' environment}}%
  \endlist}
\makeatother

\makeatletter
\newcommand{\centeredfigurecaption}[2]{%
  \begingroup
  \long\def\@makecaption##1##2{%
    \@IEEEfigurecaptionsepspace
    \hbox to\hsize{\normalfont\footnotesize\hfil
      \hbox{\normalfont\footnotesize{##1.}\nobreakspace\nobreakspace ##2}%
      \hfil}}%
  \caption{#1}\label{#2}%
  \endgroup
}
\makeatother

\newtheorem{theorem}{Theorem}
\newtheorem{lemma}{Lemma}
\newtheorem{proposition}{Proposition}
\newtheorem{corollary}{Corollary}
\theoremstyle{definition}

\theoremstyle{plain}
\newtheorem{remark}{Remark}

\newcommand{\E}{\mathbb{E}}
\newcommand{\Kset}{\mathcal{K}}

\newcommand{\Tatt}{{T^{\mathsf{att}}}}

\newcommand{\Tcmm}{{T^{\mathsf{cmm}}}}
\newcommand{\Tcmp}{{T^{\mathsf{cmp}}}}
\newcommand{\Catt}{{C^{\mathsf{att}}}}
\newcommand{\Cffn}{{C^{\mathsf{ffn}}}}
\newcommand{\latt}{{\mu^{\mathsf{att}}}}
\newcommand{\lffn}{{\mu^{\mathsf{ffn}}}}
\newcommand{\Ps}{{P^{\mathsf{s}}}}

\newcommand{\Tsur}{{\tilde T}}
\newcommand{\Pbar}{{\bar P}}

\title{Low-Latency MoE Inference in Space Data Centers: A Wireless-Power-Transfer Perspective}
\author{Haotian~Zheng,
Zhanwei~Wang,
and Kaibin~Huang,~\IEEEmembership{Fellow,~IEEE}%
\thanks{H. Zheng, Z. Wang, and K. Huang are with the Department of Electrical and Computer Engineering, The University of Hong Kong, Hong Kong SAR, China (Email: \{htzheng, zhanweiw, huangkb\}@eee.hku.hk). Corresponding authors: K. Huang; Z. Wang.}}

\begin{document}
\bstctlcite{IEEErefControl}
\maketitle

\begin{abstract}

Space data centers (SpaceDCs) are emerging as platforms for large language models (LLMs), supported by continuous solar-energy harvesting and high-speed interconnections.
The mixture-of-experts (MoE) architecture supports distributed LLM inference by deploying experts across satellites.
However, skewed expert activation creates uneven computation power demands despite uniform solar-power supply at the dawn--dusk sun-synchronous orbit (SSO).
Consequently, active experts remain power-constrained while harvested power at idle experts is underused. These reduce inference efficiency and increase latency.
To address this supply--demand mismatch, we propose the SpaceWPT framework, which uses inter-satellite laser-based wireless power transfer (WPT) to redistribute harvested power.
The key design challenge is to balance the computation demands of power-donating and power-receiving satellites under WPT losses: transferring more power accelerates computation at the receiver but leaves less power for the donor.
After deriving design principles that simplify WPT control, we characterize the resulting latency tradeoff through a tractable surrogate of average inference latency.
Minimizing the surrogate yields a novel two-sided water-filling strategy that determines whether each active expert donates, retains, or receives power according to the water level and its WPT link efficiency.
Simulations with Mixtral-8x7B and Phi-3.5-MoE show up to 29.25\% lower inference latency than without WPT, with larger gains under more compact formations and higher transfer efficiencies.

\end{abstract}

\begin{IEEEkeywords}
Space data centers, Wireless Power Transfer, Mixture of Experts, Power Control.
\end{IEEEkeywords}

\section{Introduction}
\noindent

The \emph{sixth-generation} (6G) mobile networks are envisioned to extend ubiquitous intelligence beyond the terrestrial edge into space~\cite{6g_vision}.
Aligned with this vision, \emph{space data centers} (SpaceDCs), which are built on tight-formation satellite fleets with every satellite carrying a solar array and on-board GPUs, are emerging as a key platform for Space AI~\cite{liu2025computing_space,wang2026spaceai}.
Deployment in a dawn--dusk \emph{sun-synchronous orbit} (SSO) enables near-continuous and spatially uniform solar-energy harvesting across satellites, while inter-satellite distances of only hundreds of meters support high-bandwidth, low-latency connections among their GPUs~\cite{virgili2015descending,y2025towards}.
Together, these features make SpaceDCs a scalable, solar-powered computing platform for distributed \emph{large language models} (LLMs)~\cite{ict_investment}.
\emph{Mixture-of-Experts} (MoE) is a widely adopted architecture for LLMs, comprising multiple expert networks that can be deployed across satellites~\cite{wang2026spacemoe}.
Within this architecture, a gating mechanism activates experts according to the input token, and the activated experts perform computation in parallel~\cite{shazeer2017outrageously}.
However, token-dependent expert activation creates heterogeneous power demands that mismatch the uniform solar-power supply across the SpaceDC.
Under this \emph{supply--demand mismatch}, satellites with higher power demand remain constrained by their local solar-power supply.
This limits their computation speed and hence increases MoE inference latency.
To tackle this challenge, we propose SpaceWPT, a system that explores \emph{wireless power transfer} (WPT) in SpaceDCs to reallocate power among satellites to accelerate MoE inference.

Existing methods mitigate the supply-demand mismatch by redistributing power demand across nodes through model placement at deployment or expert routing during inference.
Model-placement strategies determine the power demand by assigning model components to computing nodes~\cite{mirhoseini2017device}.
Several works optimize MoE expert placement according to token workloads and activation patterns to balance processing loads and reduce inter-node communication~\cite{moetuner,prism}.
However,
inter-satellite link conditions in SpaceDCs vary continuously, causing the placement optimized at deployment to gradually become suboptimal.
Adjusting the placement to the varying links
requires transferring billion-parameter experts between satellites, incurring high communication overhead~\cite{team2026kimi}.
Alternatively, expert routing instead keeps model placement fixed and reshapes the token-level power demand by determining which experts process each token.
Existing methods balance expert loads by routing tokens to less-loaded experts when gate scores are comparable~\cite{laser}, or reduce communication overhead by replacing or skipping experts based on representation similarity and channel conditions~\cite{siftmoe}. 
Nevertheless, such operations alter the MoE model output and hence degrade inference accuracy.
The proposed WPT approach has no such limitations.
Rather than balancing power demand through workload redistribution, it adjusts the supply side by reallocating harvested solar power among satellites. 
This leaves expert placement and routing unchanged, avoiding the overhead of transferring expert parameters and preserving inference accuracy.

There exists a rich literature on the application of microwave-based WPT to power different types of terrestrial wireless networks, ranging from communication and sensor networks to edge computing networks.
Relevant studies use downlink WPT to maximize uplink throughput~\cite{ju2014wpcn} and deploy dedicated power beacons to optimize network deployment under outage constraints~\cite{huang2014cellularwpt}.
More recently, WPT has been integrated with edge computing and federated learning, where transferred energy powers local processing and wireless transmission to accelerate computation and communication at energy-constrained devices~\cite{you2016wpt,wpt_edge,zeng2022wpfeel}.
In contrast, laser-based WPT in space operates under different scales of inter-node distances and power demands.
Inter-satellite distances span hundreds of meters to kilometers, beyond the operating range of microwave-based WPT, while LLM inference requires substantially more power than low-power communication or sensing. 
These requirements favor \emph{laser-based WPT}, whose low-divergence laser beams maintain high power density over long distances and whose narrow spectrum enables photovoltaic receivers to be optimized for high conversion efficiency~\cite{mohsan2023owpt,fafard2021photovoltaic}.
Moreover, operating in space avoids the safety constraints that limit high-power laser transmission in terrestrial environments~\cite{soltani2022laser_safety}.
Existing laser-based WPT studies investigate transferring surplus energy from large satellites to energy-deficient CubeSats, small satellites used for communication, and mitigating pointing errors to improve harvested power~\cite{PowerHotspots,owpt_alignment}.
These works establish the feasibility of laser-based WPT in space, but they mainly focus on point-to-point single-link design.
However, WPT for distributed MoE in space is more sophisticated as it requires network-level coordinated designs across multiple satellites to accommodate their activation-based heterogeneous power demands.

Designing network-level WPT for distributed MoE inference poses two unique challenges in terms of controlling transfer directions and power levels.
First, transfer directions need to accommodate the uneven power demands induced by skewed expert activation.
As expert activation varies from token to token, the locations of power shortages and surpluses change, requiring corresponding adjustments to inter-satellite WPT.
Second, even with the transfer direction determined, the power level should be carefully controlled to balance the receiver’s power demand against the transmitter’s own computation needs.
Increasing the transfer power accelerates receiver-side computation but leaves less power for computation at the transmitter, correspondingly increasing its latency.
Furthermore, inter-satellite WPT incurs losses from electrical--optical conversion and beam spreading over the air, which complicates power control.
Therefore, effective WPT control requires optimizing transfer directions and power levels according to the computation demands at both ends and the associated link efficiencies.

To address these challenges, we present SpaceWPT, a WPT-assisted inference framework for low-latency token generation in SpaceDCs.
SpaceWPT builds on the existing SpaceMoE architecture, where the attention and gating modules are deployed on a gateway satellite, while the expert modules are distributed across the remaining satellites~\cite{wang2026spacemoe}.
Each token is processed by the attention module at the gateway and then by the activated experts in parallel, whose outputs are aggregated at the gateway.
To the best of our knowledge, this is the first attempt to deploy WPT for LLM inference in space.
The key contributions and findings of this work are summarized as follows.

\begin{itemize}
    \item \textbf{Attention--Expert Latency Tradeoff:}
    We theoretically characterize the effect of WPT power control on E2E MoE inference latency.
    To this end, two basic design principles are derived to guide the control of transfer directions and power levels.
    First, it is optimal for a satellite to transfer all its harvested power when its expert is inactive. 
    Second, it never transmits and receives power simultaneously when its expert is active.
    Based on these principles, we construct a tractable surrogate for the inference latency under stochastic expert activation.
    The resulting surrogate reveals an attention--expert latency tradeoff: power transferred by an active expert to support attention computation reduces attention latency while increasing the expert latency.
    Balancing this tradeoff to reduce the total inference latency necessitates the optimization of power control in the sequel.

    \item \textbf{Activation-Based WPT Control:}
    To balance the tradeoff between attention and expert latencies, we design a near-optimal strategy for WPT power controls by minimizing the latency surrogate.
    The resulting solution exhibits a \emph{two-sided water-filling} structure.
    Compared with conventional water filling, which allocates power from a single source to multiple links, this structure allows each satellite to donate, retain, or receive power.
    In this structure, the water level is determined by expert activation statistics, the MoE inference workload, and WPT link efficiencies.
    Then, the transfer direction and power level of each satellite can be obtained by comparing its link efficiency with the water level.
    The solution shows that satellites with more efficient WPT links have higher power levels, while those with inefficient links tend to retain their harvested power.
    This strategy reduces the original high-dimensional problem to a single scalar equation for the water level while providing a provable near-optimality guarantee.

    \item \textbf{Experiments:}
    Experiments with Mixtral-8x7B and Phi-3.5-MoE on HumanEval confirm the derived attention--expert tradeoff and the substantial latency reductions of SpaceWPT over No-WPT baselines.
    The latency gain increases as WPT efficiency improves and the SpaceDC formation becomes more compact, highlighting the benefit of efficient short-range WPT.
\end{itemize}

The remainder of this paper is organized as follows.
Section~\ref{sec:system_model} presents the system model and performance metrics.
Section~\ref{sec:framework} analyzes the inference latency and derives a tractable surrogate.
Section~\ref{sec:power_control} develops the WPT power-control strategy, and Section~\ref{sec:experiments} presents the experimental results.

\section{System Model and Metrics}
\label{sec:system_model}

\subsection{MoE Inference Model}
\label{sec:computation_model}

\subsubsection{MoE Architecture and Inference}
\label{sec:moe_architecture}

We consider a decoder-based large language model whose $L$ decoder layers employ the MoE architecture, illustrated in Fig.~\ref{fig:moe_architecture}.
Each decoder layer comprises three modules: a multi-head attention module shared by every token, a gating module that selects the active experts, and an expert module containing $K$ expert \emph{feed-forward networks} (FFNs).
First, each token's input is processed by the attention module, which requires $\Catt$ FLOPs.
Second, a gating matrix $\bm{W}_{\mathsf{g}}$ maps the output of the attention module to a score vector.
Third, $\kappa$ experts with the highest gating scores are activated, forming the active expert set $\mathcal{A}(\mathbf{s})\triangleq\{k:s_k=1\}$, where $s_k$ is the binary activation indicator for expert~$k$.
The activated experts process the token in parallel, each requiring $\Cffn$ FLOPs.
The gateway then aggregates their outputs using the normalized gating weights to obtain the output of the current MoE layer.

\subsubsection{Expert Activation Model}

To quantify the activation distribution, we adopt a classical \emph{probability-proportional-to-size without replacement} (PPSWOR) model for sampling $\kappa$ experts out of $K$ candidates~\cite{ppswor_rosen}.
This theory associates each expert~$k$ with an importance weight $\omega_k>0$.
The resulting  \emph{probability mass function} (PMF) of top-$\kappa$ experts is given by
\begin{equation}
    \Pr\{\mathcal{A}(\mathbf{s})=\mathcal{U}\}=\frac{\prod_{k\in\mathcal{U}}\omega_k}{e_\kappa(\omega_1,\dots,\omega_K)},\qquad |\mathcal{U}|=\kappa,
    \label{eq:ppswor_pmf}
\end{equation}
where $\mathcal{U}\subseteq\{1,\dots,K\}$ and $e_\kappa$ is the $\kappa$-th elementary symmetric polynomial,
\begin{equation*}
    e_\kappa(\omega_1,\dots,\omega_K)\triangleq\sum_{\substack{\mathcal{U}\subseteq\{1,\dots,K\}\\ |\mathcal{U}|=\kappa}}\prod_{k\in\mathcal{U}}\omega_k.
\end{equation*}
Note that the per-expert \emph{activation probability} of this model is $\alpha_k\triangleq\Pr\{k\in\mathcal{A}(\mathbf{s})\}=\E[s_k]$, which is a monotone increasing function of $\omega_k$ and can be computed by
\begin{equation}
    \alpha_k=1-\frac{e_\kappa(\omega_1,\dots,\hat{\omega}_k,\dots,\omega_K)}{e_\kappa(\omega_1,\dots,\omega_K)},
    \label{eq:alpha_k_def}
\end{equation}
where $\hat{\omega}_k$ denotes omission of $\omega_k$.

\begin{figure}[t]
    \centering
    \includegraphics[width=1.02\columnwidth]{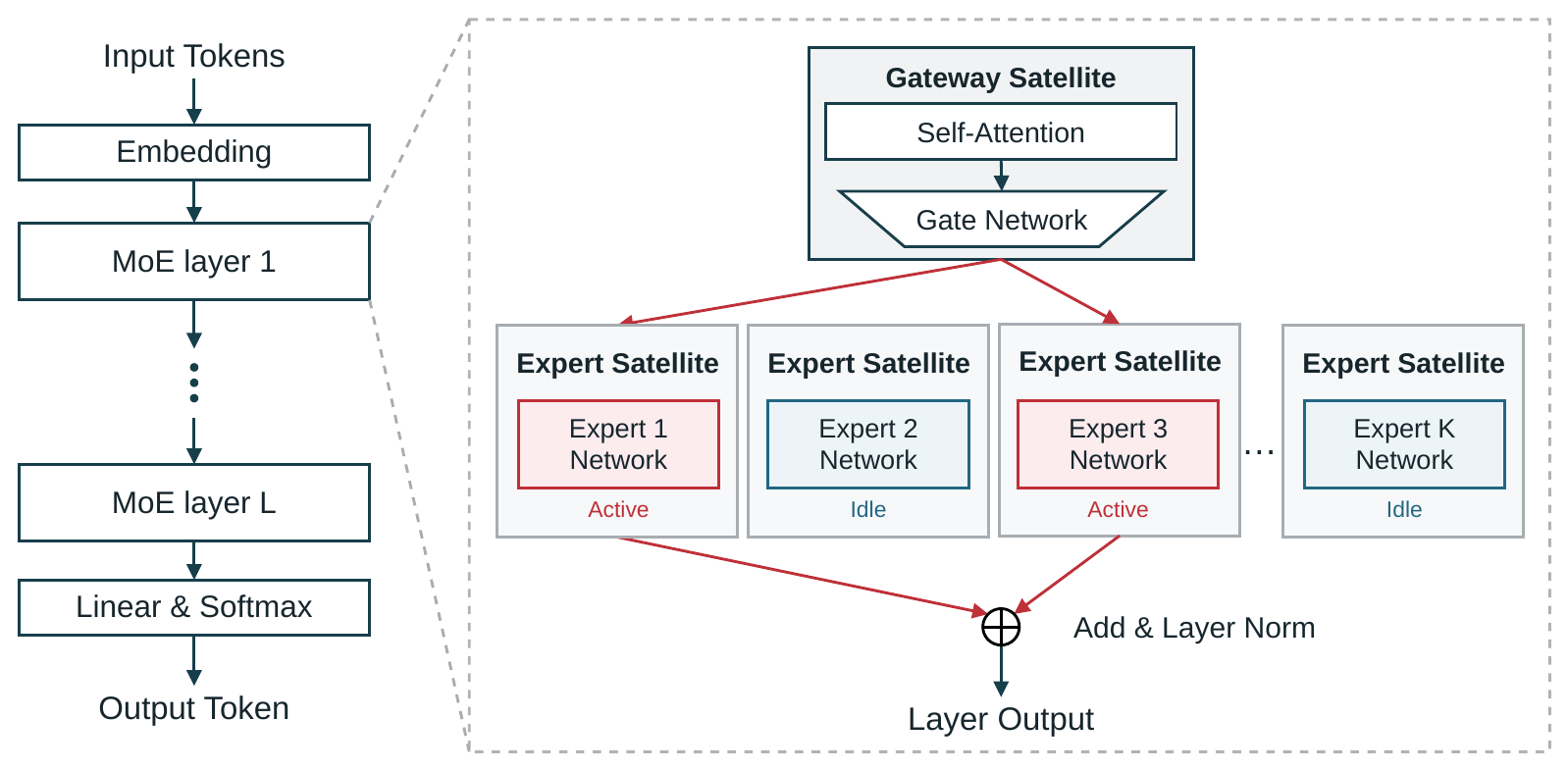}
    \caption{MoE inference architecture and its deployment.}
    \label{fig:moe_architecture}
\end{figure}

\subsubsection{Computation Latency Model}

We model the computation latency at each processor under \emph{dynamic voltage and frequency scaling} (DVFS).
The relation between the computation speed and the allocated compute power $P^{\sf{cmp}}$ is modeled as~\cite{dvfs_chandrakasan}
\begin{equation}
    f = \left(\frac{P^{\sf{cmp}}}{\gamma}\right)^{\!1/3},
    \label{eq:dvfs}
\end{equation}
where $f$ is measured in FLOPs per second and $\gamma>0$ is an architecture-dependent coefficient.
The resulting latency of executing a workload of $C$ FLOPs on such a processor is given by
\begin{equation}
    \Tcmp = C\left(\frac{\gamma}{P^{\sf{cmp}}}\right)^{\!1/3}.
    \label{eq:latency_generic}
\end{equation}

\subsection{Modelling Space Data Center}
\label{sec:sdc_model}

\subsubsection{Expert-to-Satellite Mapping}
\label{sec:sdc_mapping}

We consider a SpaceDC comprising a formation of $K+1$ satellites that jointly execute the MoE model.
For each MoE layer, the attention and gating modules are deployed on one satellite, designated as the \emph{gateway satellite} and indexed by~$0$.
Each of the remaining $K$ satellites hosts one expert FFN and is referred to as an \emph{expert satellite}.
Under this deployment, each expert and its host satellite share the index $k\in\Kset\triangleq\{1,\ldots,K\}$.

\subsubsection{Geometry of Space Data Center}
\label{sec:sdc_orbit}

Following the orbital design recently proposed in~\cite{y2025towards}, the SpaceDC illustrated in Fig.~\ref{fig:sdc_orbit_overview} operates as a formation-flying cluster in a dawn--dusk SSO.
Since this orbit approximately follows Earth's day--night terminator, it provides near-constant solar illumination~\cite{virgili2015descending}.
We model each satellite's harvested solar power as a constant budget $\Ps$.

To describe the cluster geometry, we adopt a \emph{local-vertical-local-horizontal} (LVLH) coordinate system that co-rotates with the gateway satellite, as illustrated in Fig.~\ref{fig:sdc_orbit_overview}: the $x$-axis points radially outward from the Earth, and the $y$-axis aligns with the gateway satellite's instantaneous velocity.
In this coordinate system, let $(x_{k}(t),y_{k}(t))$ denote the position of expert satellite~$k$ relative to the gateway satellite.
The relative motion is governed by the \emph{Hill-Clohessy-Wiltshire} (HCW) equations~\cite{clohessy1960terminal}
\begin{equation}
    \ddot{x}_{k} - 2\omega\dot{y}_{k} - 3\omega^2 x_{k} = 0,\quad
    \ddot{y}_{k} + 2\omega\dot{x}_{k} = 0,
    \label{eq:hcw}
\end{equation}
where $\omega$ denotes the orbital angular speed of the gateway satellite.
When the initial radial offset and along-track speed satisfy $\dot{y}_{k}(0)=-2\omega\,x_{k}(0)$, expert satellite~$k$ stays on a closed relative orbit around the gateway instead of drifting away.
The solutions of~\eqref{eq:hcw} admit the closed form, given by
\begin{equation}
    \begin{aligned}
        x_{k}(t) &= \rho_{k}\cos(\omega t+\phi_{k}),\\
        y_{k}(t) &= -2\rho_{k}\sin(\omega t+\phi_{k}) + y_{k}^{\mathsf{off}},
    \end{aligned}
    \label{eq:hcw_sol}
\end{equation}
where $\rho_{k}$ sets the size of the relative orbit, $\phi_{k}$ sets the starting position on this orbit, and $y_{k}^{\mathsf{off}}$ is the satellite's average along-track offset from the gateway.
Geometrically,~\eqref{eq:hcw_sol} traces a closed 2:1 ellipse centered on $(0,\,y_{k}^{\mathsf{off}})$ with period $T_{\mathsf{orb}}=2\pi/\omega$.
Thus, expert satellite~$k$'s distance from the gateway satellite is
\begin{equation}
    d_{k}(t) = \sqrt{x_{k}^2(t) + y_{k}^2(t)},
    \label{eq:dk_t}
\end{equation}
which varies on the orbital timescale $T_{\mathsf{orb}}\sim 10^{3}$~s.
In contrast, each token inference latency is on the order of milliseconds, six orders of magnitude shorter than $T_{\mathsf{orb}}$~\cite{pope2023efficiently}.
Therefore, the cluster geometry is approximately fixed within any single inference pass, and we treat $\{d_{k}(t)\}$ as constants $\{d_{k}\}$ in the remainder of this paper.
The resulting gateway-to-expert distances $\{d_{k}\}$ determine the link-specific communication and WPT characteristics developed in the following two subsections.

\begin{figure}[t]
    \centering
    \includegraphics[width=0.95\columnwidth]{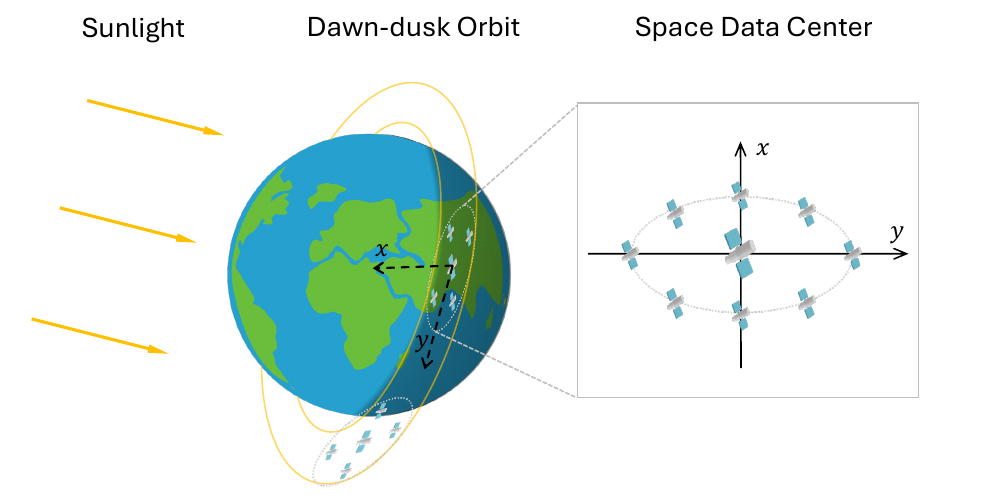}
    \caption{SpaceDCs in a dawn–dusk orbit and the relative-orbit geometry of their satellite clusters.}
    \label{fig:sdc_orbit_overview}
\end{figure}

\begin{figure*}[t]
    \centering
    \makebox[\textwidth][c]{%
    \subfigure[SpaceWPT system.]{%
        \includegraphics[height=0.230\textwidth,trim=5 8 3 4,clip]{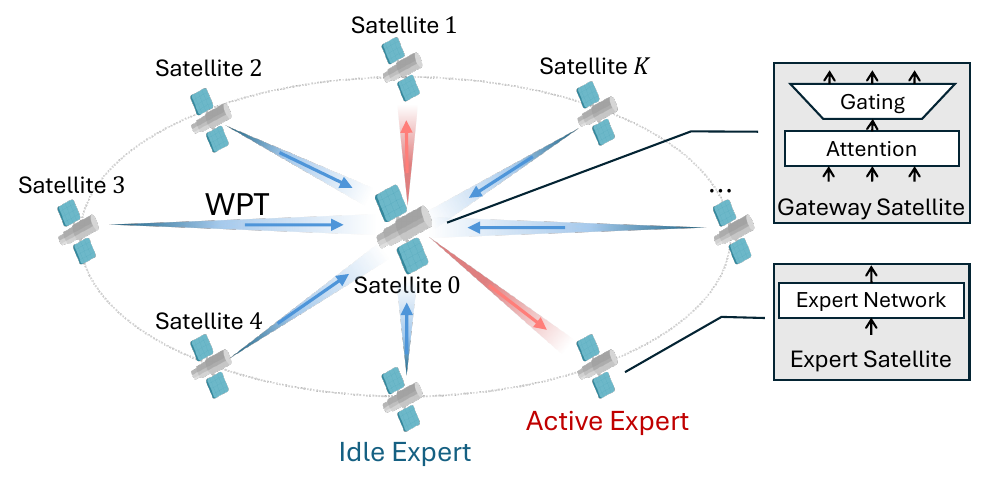}%
        \label{fig:space_wpt_system}%
    }%
    \hspace{0.02\textwidth}%
    \subfigure[System operations.]{%
        \includegraphics[height=0.230\textwidth,trim=64 116 18 8,clip]{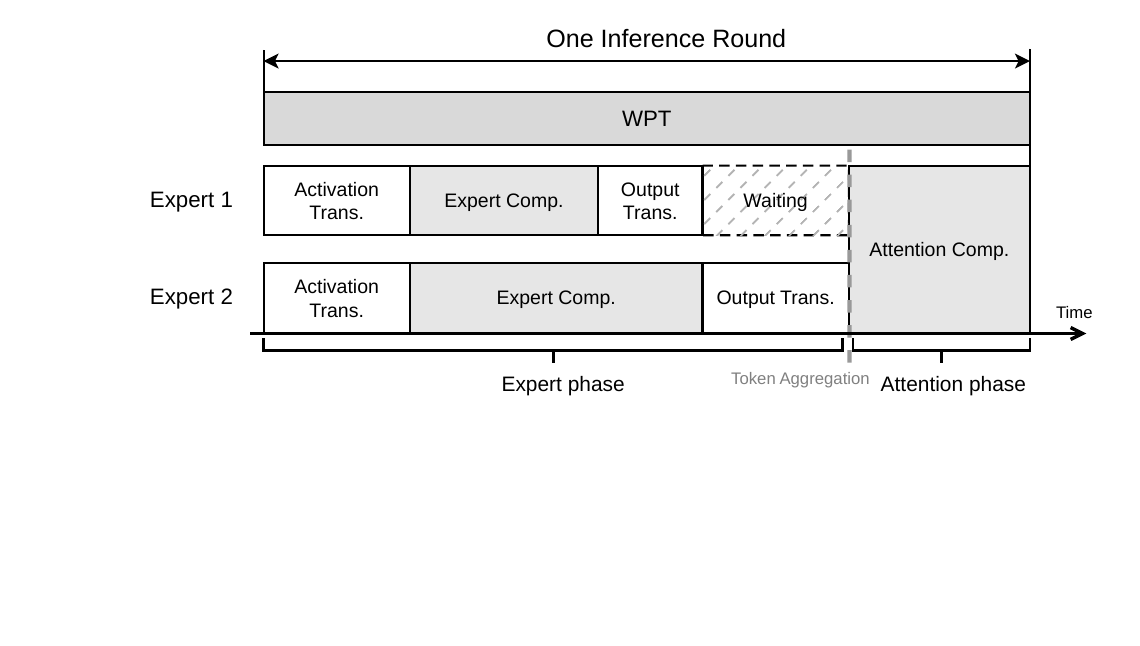}%
        \label{fig:system_operation}%
    }}%
    \centeredfigurecaption{SpaceWPT system and operations.}%
      {fig:spacewpt_system_operations}
\end{figure*}

\subsection{Communication Model}
\label{sec:communication_model}

Inter-satellite data exchange employs \emph{free-space optical} (FSO) links operating in a diffraction-limited regime~\cite{majumdar2008fso,kaushal2017fso}. 
For a link with transmit and receive aperture diameters $D_{\mathsf{tx}}^{\mathsf{c}}$ and $D_{\mathsf{rx}}^{\mathsf{c}}$ at wavelength $\lambda^{\mathsf{c}}$, the channel power gain between two nodes separated by distance $d$ is
\begin{equation}
    h(d) = \eta^{\mathsf{c}}
      \!\left(\frac{\pi\, D_{\mathsf{tx}}^{\mathsf{c}} D_{\mathsf{rx}}^{\mathsf{c}}}
                    {4\,\lambda^{\mathsf{c}}\,d}\right)^{\!2},
    \label{eq:fso_gain}
\end{equation}
where $\eta^{\mathsf{c}}\in(0,1]$ absorbs pointing errors and optical-path losses.
Applying this model to gateway--expert link~$k$ at distance $d_k$ gives
\begin{equation*}
    h_k \triangleq h(d_k).
\end{equation*}
Hence, different gateway-to-expert distances result in heterogeneous communication-channel gains across the SpaceDC.
With transmit power $P^{\mathsf{cmm}}$, noise spectral density $N_0$, and link bandwidth $B$, the achievable data rate over the FSO link is~\cite{goldsmith2005wireless}
\begin{equation}
    R(d) = B\,\log_2\!\left(1+\frac{P^{\mathsf{cmm}}\,h(d)}{N_0 B}\right).
    \label{eq:fso_rate}
\end{equation}
Transmitting $Q$ bits over a link of distance $d$ takes a one-way latency
\begin{equation}
    \Tcmm(d, Q) = \frac{Q}{R(d)} + \frac{d}{c},
    \label{eq:one_way_tcmm}
\end{equation}
where $c$ is the speed of light and the second term is the propagation delay.

\subsection{Modelling Wireless-Power-Transfer}
\label{sec:wpt}

\subsubsection{Laser-Based WPT Model}

We adopt \emph{laser-based WPT}~\cite{PowerHotspots,mohsan2023owpt} for inter-satellite power transmission in the cluster.
Each satellite carries a dedicated high-power laser as the WPT transmitter, together with an on-board \emph{photovoltaic} (PV) receiver that converts incident laser irradiance into electrical power~\cite{fafard2021photovoltaic,helmers2021photonic}.
Any satellite can therefore act as a WPT source for any other, over an optical channel separate from the data link.

Following the same diffraction-limited aperture-coupling law as in~\eqref{eq:fso_gain}, the optical channel power gain at link distance $d$ is
\begin{equation}
    g(d) = \eta^{\mathsf{w}}
      \!\left(\frac{\pi\, D_{\mathsf{tx}}^{\mathsf{w}} D_{\mathsf{rx}}^{\mathsf{w}}}
                    {4\,\lambda^{\mathsf{w}}\,d}\right)^{\!2},
    \label{eq:wpt_gain}
\end{equation}
where $\lambda^{\mathsf{w}}$, $D_{\mathsf{tx}}^{\mathsf{w}}$, and $D_{\mathsf{rx}}^{\mathsf{w}}$ are the laser wavelength and transmit/receive aperture diameters, and $\eta^{\mathsf{w}}\in(0,1]$ absorbs optical path losses.
WPT energy traverses three cascaded conversion stages: an \emph{electrical-to-optical} (E/O) stage at the transmit laser with wall-plug efficiency $\eta_{\mathsf{eo}}\in(0,1]$, the optical channel in~\eqref{eq:wpt_gain}, and an \emph{optical-to-electrical} (O/E) stage at the PV receiver with conversion efficiency $\eta_{\mathsf{oe}}\in(0,1]$.
The cascade yields the end-to-end electrical-to-electrical transfer efficiency
\begin{equation}
    \eta(d) \triangleq \eta_{\mathsf{eo}}\, g(d)\, \eta_{\mathsf{oe}}.
    \label{eq:eta_end_to_end}
\end{equation}
For gateway--expert link~$k$ at distance $d_k$, the corresponding WPT efficiency is
\begin{equation}
    \eta_k \triangleq \eta(d_k)\in(0,1).
    \label{eq:etak_def}
\end{equation}
Hence, different gateway-to-expert distances result in heterogeneous WPT efficiencies across the SpaceDC.
Let $P$ denote the electrical input power to the transmit laser.
Accordingly, the harvested electrical power at the receiver is then $\eta(d)\,P$.
When this link is directed from node~$i$ to node~$j$, we write the transmit power as $P_{i\to j}$, the convention used throughout the remainder of the paper.

\subsubsection{WPT Topology}
\label{sec:sdc_wpt_topology}

We consider the \emph{star} WPT topology illustrated in Fig.~\ref{fig:space_wpt_system}, in which each expert exchanges power only with the gateway.
At every node, the available compute power equals the local solar-power budget plus the received WPT power minus the transmitted WPT power.
The gateway aggregates contributions from all $K$ experts, given by
\begin{equation}
    P_0^{\mathsf{cmp}} = \Ps + \sum_{k=1}^{K}\eta_k\,P_{k\to 0} - \sum_{k=1}^{K}P_{0\to k},
    \label{eq:P0_cmp_generic}
\end{equation}
while each expert~$k$ couples only with the gateway, given by
\begin{equation}
    P_k^{\mathsf{cmp}} = \Ps + \eta_k\,P_{0\to k} - P_{k\to 0}.
    \label{eq:Pk_cmp_generic}
\end{equation}
Each WPT transmit power is non-negative and capped by the local solar-power budget, given by
\begin{equation}
    0\le P_{k\to 0} \le \Ps,
    \qquad
    0\le P_{0\to k} \le \Ps.
    \label{eq:per_node_tx_caps}
\end{equation}

\subsection{System Operations}
\label{sec:operation_flow}

We adopt expert parallelism for distributed MoE inference in the SpaceDC~\cite{rajbhandari2022deepspeedmoe}.
Each round begins with a gating decision at the gateway and ends before the subsequent gating decision.
The operations within one round are illustrated in Fig.~\ref{fig:system_operation} and described as follows.
\begin{enumerate}
    \item \emph{Gating:}
    Given the attention output from the preceding round, the gateway performs the gating operation described in Section~\ref{sec:computation_model}, yielding the activation state~$\mathbf{s}$ and the active expert set~$\mathcal{A}(\mathbf{s})$ defined in Section~\ref{sec:moe_architecture}.

    \item \emph{WPT Power Control:}
    Based on~$\mathbf{s}$, the gateway configures the WPT transmit powers using the power-control solution derived in Section~\ref{sec:power_control}, over the star topology described in Section~\ref{sec:sdc_wpt_topology}.

    \item \emph{Activation Transmission:}
    The gateway transmits the input activation to the active experts over the FSO data links modeled in Section~\ref{sec:communication_model}.

    \item \emph{Expert Computation:}
    Each active expert processes the received activation using its onboard FFN, following the MoE inference model in Section~\ref{sec:computation_model}.

    \item \emph{Token Aggregation:}
    Each active expert returns its FFN output to the gateway over the FSO data link.
    The gateway aggregates the received outputs to obtain the layer output.

    \item \emph{Attention Computation:}
    The gateway computes the attention output from the aggregated expert output and passes it to the gating module, thereby initiating the next round.
\end{enumerate}
During Steps~3--5, the active-expert paths operate in parallel, where each expert starts its FFN computation upon receiving the activation and returns its output upon completion.
The WPT powers configured in Step~2 are held fixed throughout Steps~3--6. Separate optical links allow WPT to operate concurrently with data communication.

\subsection{Performance Metrics}
\label{sec:metrics}
 
\subsubsection{Inference Accuracy}
\label{sec:metric_accuracy}
 
The inference accuracy of the MoE model depends on its gating matrix and expert FFNs, both of which remain fixed after deployment.
The WPT power allocation changes only the compute power available at each node.
Inference accuracy is therefore independent of the WPT power allocation.
 
\subsubsection{Inference Latency}
\label{sec:metric_latency}

We therefore adopt the per-token inference latency as the performance metric.
The gating and WPT-control signaling in Steps~1--2 incur negligible latency and are omitted.

The inference latency therefore comprises the parallel expert phase in Steps~3--5 and the gateway-side attention computation in Step~6.
Let $Q^{\mathsf{dl}}$ and $Q^{\mathsf{ul}}$ denote the input-activation and expert-output sizes, respectively.
Following the FSO communication model in Section~\ref{sec:communication_model}, the communication latency along expert~$k$'s path is
\begin{equation}
    T_k^{\mathsf{cmm}} \triangleq \Tcmm(d_k,Q^{\mathsf{dl}}) + \Tcmm(d_k,Q^{\mathsf{ul}}).
    \label{eq:Tkcmm_def}
\end{equation}
Combining this communication latency with the computation model in Section~\ref{sec:computation_model} and the parallel execution of Steps~3--5 gives the expert-phase latency
\begin{equation}
    T^{\mathsf{exp}}\!\left(\left\{P_k^{\mathsf{cmp}}\right\};\mathbf{s}\right)
    = \max_{1\le k\le K}s_k\left[T_k^{\mathsf{cmm}}+\Cffn\left(\frac{\gamma}{P_k^{\mathsf{cmp}}}\right)^{\!\frac{1}{3}}\right],
    \label{eq:expert_bottleneck_qualitative}
\end{equation}
where $P_k^{\mathsf{cmp}}$ is the compute power available at expert~$k$, as given in~\eqref{eq:Pk_cmp_generic}.
After token aggregation, the gateway performs the attention computation in Step~6.
Following the computation model in Section~\ref{sec:computation_model}, its latency is
\begin{equation}
    \Tatt\!\left(P_0^{\mathsf{cmp}}\right) = \Catt\left(\frac{\gamma}{P_0^{\mathsf{cmp}}}\right)^{\!1/3},
    \label{eq:center_latency_term}
\end{equation}
where $P_0^{\mathsf{cmp}}$ is the compute power available at the gateway, as given in~\eqref{eq:P0_cmp_generic}.
The gateway and expert compute powers are collected in the vector $\mathbf{P}^{\mathsf{cmp}}\triangleq\left(P_0^{\mathsf{cmp}},P_1^{\mathsf{cmp}},\ldots,P_K^{\mathsf{cmp}}\right)$.
The per-token inference latency is given by
\begin{equation}
    T\!\left(\mathbf{P}^{\mathsf{cmp}};\mathbf{s}\right)
    = T^{\mathsf{exp}}\!\left(\left\{P_k^{\mathsf{cmp}}\right\};\mathbf{s}\right)
    + \Tatt\!\left(P_0^{\mathsf{cmp}}\right),
    \label{eq:per_token_latency_physical}
\end{equation}
The optimization objective is the average per-token latency over the activation model in~\eqref{eq:ppswor_pmf}, given by
\begin{equation}
    \bar T (\mathbf{P}^{\mathsf{cmp}}) \triangleq \E_{\mathbf{s}}\!\left[T\!\left(\mathbf{P}^{\mathsf{cmp}};\mathbf{s}\right)\right].
    \label{eq:avg_latency_physical_def}
\end{equation}

\section{Inference Latency Analysis}
\label{sec:framework}

In this section, we formulate the inference latency through the lens of the WPT power control and then derive a tractable surrogate for its average.
The surrogate reveals the power--latency tradeoff between the gateway and the experts, which is exploited in the next section to optimize the WPT control.

\subsection{Latency-Minimizing Problem}
\label{sec:latency_formulation}

In this subsection, we analyze the per-token inference latency as a function of the WPT power and formulate the latency-minimizing problem.
The compute powers in~\eqref{eq:per_token_latency_physical} are determined by the local solar-power budget and WPT transfers through~\eqref{eq:P0_cmp_generic} and~\eqref{eq:Pk_cmp_generic}.
However, each expert power affects the latency only through the expert latency term~\eqref{eq:expert_bottleneck_qualitative}, so increasing it reduces the latency only when expert~$k$ is active.
Since the relation between the powers and the latency depends on the activation state, we allow each expert to set its WPT powers separately for its active and idle states, giving the state-dependent functions $P_{k\to0}(s_k)$ and $P_{0\to k}(s_k)$ for $s_k\in\{0,1\}$.
Thus, each expert carries four optimization variables, a donating and a receiving power in each of the two activation states, which are optimized to minimize the average latency.

We then formulate the latency-minimizing problem over the state-dependent WPT powers.
Stacking them across all experts and activation states gives the optimization vector
\begin{equation}
    \mathbf{q}\triangleq\bigl\{P_{k\to0}(s_k),\,P_{0\to k}(s_k)\bigr\}_{k=1,\ldots,K;\;s_k\in\{0,1\}},
    \label{eq:raw_decision_vector}
\end{equation}
which induces $\mathbf{P}^{\mathsf{cmp}}(\mathbf{q};\mathbf{s})$ through~\eqref{eq:P0_cmp_generic} and~\eqref{eq:Pk_cmp_generic} for each activation state $\mathbf{s}$.
Accordingly, the per-token latency induced by $\mathbf{q}$ is
\begin{equation*}
    T(\mathbf{q};\mathbf{s}) \triangleq
    T\!\left(\mathbf{P}^{\mathsf{cmp}}(\mathbf{q};\mathbf{s});\mathbf{s}\right).
\end{equation*}
Substituting this induced latency into~\eqref{eq:avg_latency_physical_def} gives
\begin{equation*}
    \bar T(\mathbf{q}) \triangleq \E_{\mathbf{s}}\!\left[T(\mathbf{q};\mathbf{s})\right].
\end{equation*}
Thus, the latency-minimizing problem is
\begin{align}
    (\mathcal{P}_0):\quad\min_{\mathbf{q}}\quad & \bar T(\mathbf{q}) \nonumber\\
    \mathrm{s.t.}\quad & 0\le P_{k\to0}(s_k),\,P_{0\to k}(s_k)\le\Ps, \nonumber\\
    & P_j^{\mathsf{cmp}}>0,\quad j\in\{0\}\cup\mathcal A(\mathbf s), \nonumber\\
    & k=1,\ldots,K,\ s_k\in\{0,1\}. \nonumber
\end{align}
Problem \((\mathcal P_0)\) is intractable for two reasons.
First, the optimization involves $4K$ variables in~\eqref{eq:raw_decision_vector}.
Second, the expectation of the per-token latency over the random activation state $\mathbf{s}$ has no closed-form expression.
We address the first challenge by exploiting the two design principles established in Section~\ref{sec:structural_properties} to simplify the problem in Section~\ref{sec:problem_simplification}, and the second by deriving a tractable surrogate in Section~\ref{sec:surrogate}.

\subsection{Design Principles for Optimal WPT Control}
\label{sec:structural_properties}

In this subsection, we establish two design principles for optimal WPT control that reduce the number of optimization variables in problem \((\mathcal{P}_0)\).

First, consider an idle-state expert, where it runs no FFN, so its WPT power can serve only the gateway.
The optimal solution is then to donate its full solar-power budget to the gateway, as established in Lemma~\ref{lem:bi_idle_structure}.
\begin{lemma}[Optimal idle-state powers]
\label{lem:bi_idle_structure}
Every optimal power control satisfies
\begin{equation}
    P_{k\to0}^{\star}(s_k{=}0)=\Ps,\quad P_{0\to k}^{\star}(s_k{=}0)=0,\quad k=1,\ldots,K.
    \label{eq:bi_idle_optimal}
\end{equation}
\end{lemma}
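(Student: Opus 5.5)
The plan is an exchange (perturbation) argument applied state by state. Fix any feasible $\mathbf q$ and any expert $k$. Replace its idle-state pair $\bigl(P_{k\to0}(0),P_{0\to k}(0)\bigr)$ by $(\Ps,0)$ and leave every other component of $\mathbf q$ unchanged. We then compare $T(\mathbf q;\mathbf s)$ with $T(\mathbf q';\mathbf s)$ for every activation state $\mathbf s$.

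If $s_k=1$, the idle-state variables of expert $k$ enter neither \eqref{eq:P0_cmp_generic} nor \eqref{eq:Pk_cmp_generic}. The per-token latency is therefore unchanged.

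If $s_k=0$, three observations settle the comparison.
\begin{itemize}
\item By \eqref{eq:expert_bottleneck_qualitative}, $P_k^{\mathsf{cmp}}$ enters $T^{\mathsf{exp}}$ only through the factor $s_k=0$, so the expert-phase latency is unaffected.
\item The constraint $P_k^{\mathsf{cmp}}>0$ is not imposed for $k\notin\mathcal A(\mathbf s)$. The new choice gives $P_k^{\mathsf{cmp}}=0$, which remains feasible.
\item The gateway power changes by
\[
\Delta P_0^{\mathsf{cmp}}=\eta_k\bigl(\Ps-P_{k\to0}(0)\bigr)+P_{0\to k}(0)\ \ge 0 .
\]
\end{itemize}
This change keeps $P_0^{\mathsf{cmp}}>0$. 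Since \eqref{eq:center_latency_term} is strictly decreasing in $P_0^{\mathsf{cmp}}$, we get $T(\mathbf q';\mathbf s)\le T(\mathbf q;\mathbf s)$, with strict inequality whenever $\Delta P_0^{\mathsf{cmp}}>0$. Strictness holds because $\Catt>0$ and $\eta_k>0$. Note that $\Delta P_0^{\mathsf{cmp}}>0$ exactly when $(P_{k\to0}(0),P_{0\to k}(0))\neq(\Ps,0)$.

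Taking expectations over \eqref{eq:ppswor_pmf} gives $\bar T(\mathbf q')\le\bar T(\mathbf q)$. The inequality is strict provided the event $\{s_k=0\}$ has positive probability, i.e.\ $\alpha_k<1$. This holds under PPSWOR with $\omega_j>0$ and $\kappa<K$: by \eqref{eq:alpha_k_def}, $1-\alpha_k=e_\kappa(\ldots,\hat\omega_k,\ldots)/e_\kappa>0$.

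Hence any $\mathbf q$ violating \eqref{eq:bi_idle_optimal} for some $k$ is strictly improved by this replacement and cannot be optimal. Repeating the replacement for every $k$ shows that \emph{every} optimal solution satisfies \eqref{eq:bi_idle_optimal}.

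The only delicate step is the strictness claim needed for ``every optimal'' rather than ``some optimal''. It requires $\Pr\{s_k=0\}>0$ and that the gateway latency strictly decreases. I would state explicitly that $\kappa<K$, which is the nontrivial MoE case. I would also note that changing one expert's idle variables does not interact with other experts, since in the star topology \eqref{eq:P0_cmp_generic} is additive.
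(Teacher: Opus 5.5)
Your proposal is correct and takes the same route as the paper's proof. Expert $k$'s idle-state variables leave the expert-phase latency unchanged and enter only the gateway power, which increases in $P_{k\to0}(s_k{=}0)$ and decreases in $P_{0\to k}(s_k{=}0)$, so the gateway latency is minimized at $(\Ps,0)$. Your version adds what the paper leaves implicit: the strict improvement needed for ``every'' optimal control (via $\eta_k>0$ and $\Pr\{s_k=0\}=1-\alpha_k>0$, i.e., $\kappa<K$), and the feasibility of the resulting $P_k^{\mathsf{cmp}}=0$ for an idle expert.
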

\begin{proof}
An idle expert runs no FFN, so its powers affect only the gateway power $P_0^{\mathsf{cmp}}$ and leave the expert-side latency unchanged.
Since $P_0^{\mathsf{cmp}}$ increases in $P_{k\to0}(s_k{=}0)$ and decreases in $P_{0\to k}(s_k{=}0)$, while a larger $P_0^{\mathsf{cmp}}$ always lowers the gateway latency, the optimum sets $P_{k\to0}(s_k{=}0)=\Ps$ and $P_{0\to k}(s_k{=}0)=0$.
\end{proof}
Second, consider an active-state expert, where it may both send power to and receive power from the gateway.
Transmitting in both directions at once wastes power on the round trip, so an optimal expert keeps only one direction and sets the other to zero, as established in Lemma~\ref{lem:no_counter_flow}.
\begin{lemma}[No simultaneous counter-flow]
\label{lem:no_counter_flow}
Every optimal power control satisfies
\begin{equation}
    P_{k\to0}^{\star}(s_k{=}1)\,\cdot P_{0\to k}^{\star}(s_k{=}1)=0,\qquad k=1,\ldots,K.
    \label{eq:no_counter_flow}
\end{equation}
\end{lemma}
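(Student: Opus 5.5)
The plan is an exchange argument. Suppose an optimal control $\mathbf{q}^\star$ has, for some expert $k$, both $a\triangleq P_{k\to0}^\star(s_k{=}1)>0$ and $b\triangleq P_{0\to k}^\star(s_k{=}1)>0$. I will build a modified control $\tilde{\mathbf{q}}$ that differs from $\mathbf{q}^\star$ only in these two active-state variables of expert $k$. It will remain feasible, leave $P_k^{\mathsf{cmp}}$ unchanged, and strictly increase $P_0^{\mathsf{cmp}}$ in every activation state with $s_k=1$. Since the active-state variables of expert $k$ enter only the states with $s_k=1$, every state with $s_k=0$ is untouched.

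\textbf{Construction.} Set $\delta\triangleq\min\{a,\eta_k b\}>0$, and define
\[
\tilde P_{k\to0}=a-\delta, \qquad \tilde P_{0\to k}=b-\delta/\eta_k .
\]
Both new values are nonnegative by the choice of $\delta$, and both are no larger than the originals, so the caps in~\eqref{eq:per_node_tx_caps} still hold. By~\eqref{eq:Pk_cmp_generic}, the expert power changes by $-\eta_k\delta/\eta_k+\delta=0$, so $P_k^{\mathsf{cmp}}$ and hence the expert term~\eqref{eq:expert_bottleneck_qualitative} are unchanged. By~\eqref{eq:P0_cmp_generic}, the gateway power changes by
\[
-\eta_k\delta+\delta/\eta_k=\delta\bigl(1/\eta_k-\eta_k\bigr),
\]
which is strictly positive because $\eta_k\in(0,1)$ by~\eqref{eq:etak_def}. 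Consequently the positivity constraint $P_0^{\mathsf{cmp}}>0$ is preserved. Moreover, $\Tatt$ in~\eqref{eq:center_latency_term} is strictly decreasing in $P_0^{\mathsf{cmp}}$.

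\textbf{Conclusion.} For every state with $s_k=1$ we therefore get $T(\tilde{\mathbf{q}};\mathbf{s})<T(\mathbf{q}^\star;\mathbf{s})$, and equality holds in all other states. The expert powers of the other experts are unaffected, since each depends only on its own link. Taking the expectation, $\bar T(\tilde{\mathbf{q}})<\bar T(\mathbf{q}^\star)$ provided $\Pr\{s_k=1\}=\alpha_k>0$. This holds because $\omega_k>0$ makes every $\kappa$-subset containing $k$ have positive probability under~\eqref{eq:ppswor_pmf}. This contradicts optimality, which yields~\eqref{eq:no_counter_flow}.

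The main subtlety is making sure the strict improvement is not lost. Two facts are needed. First, the active state of expert $k$ must occur with positive probability, which is the observation about $\alpha_k$ above. Second, the reduction must be scaled so that the expert's compute power is held exactly fixed, which is why the receive power is reduced by $\delta/\eta_k$ rather than by $\delta$. Once these are in place, the rest is bookkeeping with~\eqref{eq:P0_cmp_generic} and~\eqref{eq:Pk_cmp_generic}.
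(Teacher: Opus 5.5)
Your proof is correct and is the paper's exchange argument: your $\delta=\min\{a,\eta_k b\}$ equals $\eta_k\varepsilon$ for the paper's $\varepsilon=\min\{P_{0\to k},P_{k\to0}/\eta_k\}$, so the expert compute power stays fixed and the gateway gains the same $\delta(1/\eta_k-\eta_k)=\varepsilon(1-\eta_k^2)>0$. Your additional checks, that $\alpha_k>0$ (so the per-state strict improvement survives the expectation) and that the caps and positivity constraints still hold, fill in details the paper leaves implicit.
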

\begin{proof}
Suppose an optimal control has $P_{k\to0}(s_k{=}1)>0$ and $P_{0\to k}(s_k{=}1)>0$ for some expert~$k$.
Reducing the two powers to $P_{k\to0}(s_k{=}1)-\eta_k\varepsilon$ and $P_{0\to k}(s_k{=}1)-\varepsilon$ with $\varepsilon=\min\{P_{0\to k}(s_k{=}1),\,P_{k\to0}(s_k{=}1)/\eta_k\}$ drives one direction to zero while leaving the expert compute power unchanged.
This cuts the power the gateway sends out by $\varepsilon$ but the power it collects from the expert by only $\eta_k^2\varepsilon$, so $P_0^{\mathsf{cmp}}$ gains $\varepsilon(1-\eta_k^2)>0$. 
Hence, the per-token latency strictly decreases, contradicting optimality.
\end{proof}

The two design principles simplify WPT control by fixing idle-state donation and eliminating simultaneous transmission and reception in the active state.
Since Lemma~\ref{lem:no_counter_flow} precludes simultaneous counter-flow, the active-state power exchange can be represented without loss of optimality by a single signed scalar~\cite{ahuja1993networkflows}
\begin{equation}
    P_k \triangleq P_{k\to0}(s_k{=}1)-\eta_k P_{0\to k}(s_k{=}1),
    \quad k=1,\ldots,K,
\label{eq:bi_pk_def}
\end{equation}
The cases $P_k>0$, $P_k=0$, and $P_k<0$ correspond to donation,
neutrality, and reception, respectively. In the reception case,
$|P_k|=\eta_kP_{0\to k}(s_k{=}1)$ is the power delivered to expert~$k$
after transfer loss.

\subsection{Inference Latency under Simplified WPT Control}
\label{sec:problem_simplification}

With WPT control simplified due to the preceding design principles, we reformulate the inference latency and problem \((\mathcal{P}_0)\) in terms of the signed-power vector $\mathbf{p}=(P_1,\ldots,P_K)$.

Firstly, we explore the relation between $P_k$ and the expert compute power, which affects the expert-side latency in~\eqref{eq:expert_bottleneck_qualitative}.
Since idle experts run no FFN, the expert-side latency depends only on active experts.
For active states $s_k=1$, the compute power of expert~$k$ is obtained from~\eqref{eq:Pk_cmp_generic} as
\begin{equation}
    P_k^{\mathsf{cmp}}(s_k{=}1) = \Ps + \eta_k P_{0\to k}(s_k{=}1) - P_{k\to 0}(s_k{=}1),
    \label{eq:Pk_cmp_active}
\end{equation}
where $P_{k\to0}(s_k{=}1)$ and $P_{0\to k}(s_k{=}1)$ can be expressed in terms of $P_k$ as
\begin{equation}
    P_{k\to0}(s_k{=}1) = (P_k)^{+},\quad
    P_{0\to k}(s_k{=}1) = \frac{(-P_k)^{+}}{\eta_k},
    \label{eq:bi_active_from_pk}
\end{equation}
with $(z)^{+}\triangleq\max\{z,0\}$.
Substituting~\eqref{eq:bi_active_from_pk} into~\eqref{eq:Pk_cmp_active} expresses this compute power through $P_k$ as
\begin{equation}
    P_k^{\mathsf{cmp}}(s_k{=}1) = \Ps - P_k,\qquad k=1,\ldots,K,
    \label{eq:Pk_cmp_from_wpt}
\end{equation}
Thus, the expert-phase latency~\eqref{eq:expert_bottleneck_qualitative} can be expressed through $\mathbf{p}$, given by
\begin{equation}
    T^{\mathsf{exp}}(\mathbf{p};\mathbf{s})=\max_{1\le k\le K}\left\{s_k\left[T_k^{\mathsf{cmm}}+\lffn\,(\Ps-P_k)^{-\frac{1}{3}}\right]\right\},
    \label{eq:expert_latency_pk}
\end{equation}
where $\lffn\triangleq\Cffn\gamma^{1/3}$ sets an expert's FFN compute latency at a given compute power.

Secondly, we explore the relation between $\mathbf{p}$ and the gateway compute power, which affects the gateway-side latency in~\eqref{eq:center_latency_term}.
Unlike the expert side, the gateway exchanges power with every expert in both activation states, so its compute power aggregates over all experts, expressed as
\begin{equation}
    P_0^{\mathsf{cmp}}(\mathbf{s}) = \Ps + \sum_{k=1}^{K}\bigl[\eta_k\,P_{k\to 0}(s_k) - P_{0\to k}(s_k)\bigr],
    \label{eq:P0_cmp_generic_repeated}
\end{equation}
where these transmit powers combine the active-state transfer~\eqref{eq:bi_active_from_pk} with the idle-state donation $\Ps$ of Lemma~\ref{lem:bi_idle_structure}, given by
\begin{equation}
\begin{aligned}
    P_{k\to0}(s_k) &= s_k\,(P_k)^{+}+(1-s_k)\,\Ps,\\
    P_{0\to k}(s_k) &= s_k\,\frac{(-P_k)^{+}}{\eta_k}.
\end{aligned}
    \label{eq:instantaneous_wpt_combined}
\end{equation}
Substituting~\eqref{eq:instantaneous_wpt_combined} into the gateway compute power~\eqref{eq:P0_cmp_generic_repeated}, the contribution of an active expert depends on $P_k$ and is given by
\begin{equation*}
    \eta_k P_{k\to0}(s_k{=}1)-P_{0\to k}(s_k{=}1)=\eta_k(P_k)^{+}-\frac{(-P_k)^{+}}{\eta_k}.
\end{equation*}
Since $(P_k)^{+}$ and $(-P_k)^{+}$ cannot both be nonzero, the expression reduces according to the sign of $P_k$ to the piecewise function $g_k(P_k)$, written as
\begin{equation}
    g_k(P_k)\triangleq\eta_k(P_k)^{+}-\frac{(-P_k)^{+}}{\eta_k}=
    \begin{cases}
      \eta_k P_k, & P_k\ge0,\\[0.4ex]
      P_k/\eta_k, & P_k<0,
    \end{cases}
    \label{eq:bi_ellk}
\end{equation}
so a donating expert ($P_k\ge0$) contributes its harvested power $\eta_k P_k$ to the gateway, whereas a receiving one ($P_k<0$) draws $|P_k|/\eta_k$ from the gateway.
Summing over donated and received power from all experts yields the gateway compute power through $\mathbf{p}$, given by
\begin{equation}
    P_0^{\mathsf{cmp}}(\mathbf{p}; \mathbf{s})=\Ps+\sum_{k=1}^{K}\bigl[(1-s_k)\eta_k\Ps+s_k\,g_k(P_k)\bigr],
    \label{eq:P0_cmp_instantaneous}
\end{equation}
Substituting~\eqref{eq:P0_cmp_instantaneous} into the attention latency~\eqref{eq:center_latency_term} then expresses the gateway-side latency through $\mathbf{p}$, given by
\begin{equation}
\Tatt(\mathbf{p}; \mathbf{s})=\latt\,\left(\Ps+\sum_{k=1}^{K}\bigl[(1-s_k)\eta_k\Ps+s_k\,g_k(P_k)\bigr]\right)^{-\frac{1}{3}},
    \label{eq:att_latency_pk}
\end{equation}
where $\latt\triangleq\Catt\gamma^{1/3}$ sets the attention compute latency at a given gateway compute power.

Combining the expert-side and gateway-side latencies, the per-token inference latency~\eqref{eq:per_token_latency_physical} becomes
\begin{equation}
T(\mathbf{p};\mathbf{s})=\Tatt(\mathbf{p}; \mathbf{s})+T^{\mathsf{exp}}(\mathbf{p}; \mathbf{s}),
    \label{eq:per_token_latency_general}
\end{equation}
Thus, the per-token inference latency~\eqref{eq:per_token_latency_general} is fully determined by the reduced variable $\mathbf{p}$, which has only $K$ dimensions, one per expert--gateway link.
Since the activation pattern $\mathbf{s}$ is random, we focus on minimizing the average inference latency $\bar T(\mathbf{p})\triangleq\E_{\mathbf{s}}[T(\mathbf{p};\mathbf{s})]$. Under the sufficient condition in~\eqref{eq:near_opt_power_condition}, the reduced problem is given by
\begin{align}
    (\mathcal{P}_1):\quad\min_{\mathbf{p}}\quad & \E_{\mathbf{s}}\!\left[T(\mathbf{p};\mathbf{s})\right] \nonumber\\
    \mathrm{s.t.}\quad & -\eta_k\Ps\le P_k<\Ps,\quad k=1,\ldots,K \nonumber
\end{align}
The constraint follows respectively from the transfer power caps in~\eqref{eq:per_node_tx_caps} and from keeping the expert compute power $P_k^{\mathsf{cmp}}=\Ps-P_k$ positive.

\subsection{Tractable Surrogate of Inference Latency}
\label{sec:surrogate}

After reducing the complexity of the optimization, we now address the second challenge within the intractable problem~$(\mathcal{P}_0)$, resulting in a surrogate function for the average latency.
Firstly, rather than averaging the gateway-side latency over random activations directly, we first average the gateway power and then compute the latency once from this average $\E_{\mathbf{s}}[P_0^{\mathsf{cmp}}]$.
Secondly, since the expert-side latency is determined by the slowest active expert, we approximate it by the average latency over the active experts instead of this maximum.
The resulting surrogate function of the average latency is given in Proposition~\ref{prop:surrogate}.

\begin{proposition}[Surrogate of inference latency]
\label{prop:surrogate}
The average per-token latency $\bar T(\mathbf{p})$ is approximated by the surrogate function, given by
\begin{equation}
    \begin{aligned}
    \Tsur(\mathbf{p})
    &\triangleq \latt\Bigl(\Ps+\textstyle\sum_{k=1}^{K}\bigl[(1-\alpha_k)\eta_k\Ps+\alpha_k\,g_k(P_k)\bigr]\Bigr)^{-\frac{1}{3}}\\
    &\quad+\sum_{k=1}^{K}\frac{\alpha_k}{\kappa}\Bigl[T_k^{\mathsf{cmm}}+\lffn(\Ps-P_k)^{-\frac{1}{3}}\Bigr].
    \end{aligned}
    \label{eq:surrogate_def}
\end{equation}
\end{proposition}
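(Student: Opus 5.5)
The proposition is a derivation of an approximation rather than an inequality, so the plan is to construct $\Tsur$ from $\bar T(\mathbf{p})=\E_{\mathbf{s}}[\Tatt(\mathbf{p};\mathbf{s})]+\E_{\mathbf{s}}[T^{\mathsf{exp}}(\mathbf{p};\mathbf{s})]$. I will apply the two approximation steps announced before the statement, one to each term. I will also record, as a by-product, in which direction each step errs.

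For the gateway term, I will use \eqref{eq:att_latency_pk} and the fact that $P_0^{\mathsf{cmp}}(\mathbf{p};\mathbf{s})$ in \eqref{eq:P0_cmp_instantaneous} is affine in $\mathbf{s}$ for fixed $\mathbf{p}$. Linearity of expectation with $\E[s_k]=\alpha_k$ from \eqref{eq:alpha_k_def} then gives
\[
\E_{\mathbf{s}}[P_0^{\mathsf{cmp}}]=\Ps+\sum_{k=1}^{K}\bigl[(1-\alpha_k)\eta_k\Ps+\alpha_k g_k(P_k)\bigr].
\]
Replacing $\E[(P_0^{\mathsf{cmp}})^{-1/3}]$ by $(\E[P_0^{\mathsf{cmp}}])^{-1/3}$ yields exactly the first term of \eqref{eq:surrogate_def}. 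Since $z\mapsto z^{-1/3}$ is convex on $(0,\infty)$, Jensen's inequality shows that this term is a lower bound on $\E_{\mathbf{s}}[\Tatt]$. The gap is controlled by the variance of $P_0^{\mathsf{cmp}}$, and under the PPSWOR law \eqref{eq:ppswor_pmf} that variance follows from the pairwise inclusion probabilities. If a quantitative statement is wanted, I would bound the gap by a second-order Taylor term in $\mathrm{Var}(P_0^{\mathsf{cmp}})/(\E P_0^{\mathsf{cmp}})^2$.

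For the expert term, write $\tau_k\triangleq T_k^{\mathsf{cmm}}+\lffn(\Ps-P_k)^{-1/3}$, so that $T^{\mathsf{exp}}=\max_{k\in\mathcal{A}(\mathbf{s})}\tau_k$ by \eqref{eq:expert_latency_pk}. I will replace the maximum by the mean over the active set, $\frac{1}{\kappa}\sum_{k}s_k\tau_k$, using $|\mathcal{A}(\mathbf{s})|=\kappa$. Taking expectations and using $\E[s_k]=\alpha_k$ gives $\sum_k\frac{\alpha_k}{\kappa}\tau_k$, which is the second term of \eqref{eq:surrogate_def}. Because a mean never exceeds a maximum, this term is also a lower bound on $\E[T^{\mathsf{exp}}]$. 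It is exact when the active $\tau_k$ coincide, so the error is governed by the spread of the $\tau_k$ within active sets.

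The main obstacle is making ``approximated'' precise, since both steps give lower bounds rather than a two-sided estimate. The sharpest quantitative handle I expect is an upper bound for the max-versus-mean step, for example $\max\le\text{mean}+\max_{k,j}|\tau_k-\tau_j|$. I would therefore frame the result as $\Tsur(\mathbf{p})\le\bar T(\mathbf{p})$ together with explicit gap terms that vanish as the $\tau_k$ equalize and $\mathrm{Var}(P_0^{\mathsf{cmp}})\to0$. This is the regime that the later water-filling solution tends to induce.
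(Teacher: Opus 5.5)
Your proposal is correct and matches the paper's construction. The paper likewise uses the certainty-equivalent term $\latt(\E_{\mathbf{s}}[P_0^{\mathsf{cmp}}])^{-1/3}$, with the mean obtained by linearity from $\Pr\{k\in\mathcal{A}(\mathbf{s})\}=\alpha_k$, and replaces the active-expert maximum by the active-set average before taking expectations. Your side observations, Jensen's inequality and mean $\le$ max giving $\Tsur(\mathbf{p})\le\bar T(\mathbf{p})$ together with explicit gap terms, are exactly what the paper uses later to prove the near-optimality corollary. There, Jensen's step relies on the condition $\delta>0$ to keep $P_0^{\mathsf{cmp}}>0$ in every state, and the max-minus-mean gap is bounded by $(\kappa-1)/\kappa$ times the expert-latency range rather than by your full range.
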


\begin{proof}
    See Appendix~\ref{app:surrogate}.
\end{proof}

The surrogate comprises a gateway-side and an expert-side term, each monotonic in $P_k$ but in opposite directions.
The first term in~\eqref{eq:surrogate_def} represents the gateway attention latency and decreases with $P_k$.
The second term represents the weighted average expert latency and increases with $P_k$ through expert~$k$'s contribution.
The resulting tradeoff is illustrated in Fig.~\ref{fig:gateway_expert_tradeoff} and clarified in the following Remark~\ref{rem:gateway_expert_tradeoff}.

\begin{figure}
    \centering
    \includegraphics[width=0.45\textwidth]{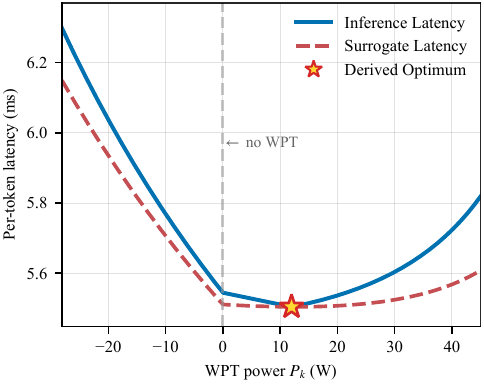}
    \caption{Comparison of the per-token inference latency and its surrogate, illustrating the gateway--expert latency tradeoff.}
    \label{fig:gateway_expert_tradeoff}
\end{figure}

\begin{remark}[Gateway-expert tradeoff]
    \label{rem:gateway_expert_tradeoff}
    The opposing monotonicities of the two terms in~\eqref{eq:surrogate_def} reveal a gateway-expert tradeoff carried by each $P_k$. In particular, raising $P_k$ enlarges the gateway compute power and shortens the gateway-side latency, but it simultaneously shrinks expert~$k$'s compute power and lengthens its expert-side latency.
\end{remark}

Therefore, the optimal power control should balance the tradeoff and can be obtained by solving the following problem.
\begin{align}
(\mathcal{P}_2):\quad\min_{\mathbf{p}}\quad & \Tsur(\mathbf{p}) \nonumber\\
    \mathrm{s.t.}\quad & -\eta_k\Ps\le P_k<\Ps,\quad k=1,\ldots,K, \nonumber\\
    & \Ps+\textstyle\sum_{k=1}^{K}\bigl[(1-\alpha_k)\eta_k\Ps+\alpha_k g_k(P_k)\bigr]>0, \nonumber
\end{align}
where the last constraint keeps the gateway compute power positive.
Since the bounds $-\eta_k\Ps\le P_k<\Ps$ are intervals and the gateway-power constraint keeps a concave function positive (each $g_k$ is concave by~\eqref{eq:bi_ellk}), the feasible set of problem~$(\mathcal{P}_2)$ is convex~\cite{boyd2004convex}.
Meanwhile, the following Lemma~\ref{lem:convexity} establishes the convexity of the objective $\Tsur$, proving that problem $(\mathcal{P}_2)$ is a convex problem.
\begin{lemma}[Convexity of the surrogate]
\label{lem:convexity}
$\Tsur$ is convex on the feasible set of~$(\mathcal{P}_2)$.
\end{lemma}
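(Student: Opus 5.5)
The plan is to split $\Tsur$ into its two summands and show each is convex on the feasible set of~$(\mathcal{P}_2)$. Since a nonnegative sum of convex functions is convex, this settles the claim.

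\emph{Expert-side term.} This term is separable in $\mathbf{p}$:
\[
\sum_k \frac{\alpha_k}{\kappa}\Bigl[T_k^{\mathsf{cmm}}+\lffn(\Ps-P_k)^{-1/3}\Bigr].
\]
Each $T_k^{\mathsf{cmm}}$ is a constant. The map $x\mapsto x^{-1/3}$ is convex on $x>0$, because its second derivative is $\tfrac{4}{9}x^{-7/3}>0$. Composing it with the affine map $P_k\mapsto \Ps-P_k$, which is positive on the feasible interval $P_k<\Ps$, preserves convexity. The weights $\alpha_k\lffn/\kappa$ are nonnegative, so the whole term is convex.

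\emph{Gateway-side term.} Write
\[
A(\mathbf{p})\triangleq \Ps+\sum_k\bigl[(1-\alpha_k)\eta_k\Ps+\alpha_k g_k(P_k)\bigr].
\]
The term is then $\latt\,\phi(A(\mathbf{p}))$ with $\phi(x)=x^{-1/3}$. Two facts are needed.
\begin{itemize}
\item $A$ is concave. Each $g_k$ is the minimum of the two lines $\eta_k P_k$ and $P_k/\eta_k$, which cross at $0$. Since $\eta_k<1$, for $P_k\ge0$ we have $\eta_kP_k\le P_k/\eta_k$, and for $P_k<0$ we have $P_k/\eta_k\le\eta_kP_k$. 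So $g_k(P_k)=\min\{\eta_kP_k,\,P_k/\eta_k\}$, which is concave. $A$ is a nonnegative combination of these concave functions plus a constant, hence concave.
\item $\phi$ is convex and nonincreasing on $(0,\infty)$, and the last constraint of $(\mathcal{P}_2)$ keeps $A>0$.
\end{itemize}
The standard composition rule (Boyd--Vandenberghe, \S3.2.4) then gives convexity of $\phi\circ A$. To keep the argument self-contained, I would prove this rule directly. For feasible $\mathbf{p},\mathbf{p}'$ and $\theta\in[0,1]$, concavity gives
\[
A(\theta\mathbf{p}+(1-\theta)\mathbf{p}')\ge\theta A(\mathbf{p})+(1-\theta)A(\mathbf{p}')>0 .
\]
Because $\phi$ is nonincreasing, this yields
\[
\phi\bigl(A(\theta\mathbf{p}+(1-\theta)\mathbf{p}')\bigr)\le\phi\bigl(\theta A(\mathbf{p})+(1-\theta)A(\mathbf{p}')\bigr),
\]
and convexity of $\phi$ bounds the right-hand side by $\theta\phi(A(\mathbf{p}))+(1-\theta)\phi(A(\mathbf{p}'))$.

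The main obstacle is that $g_k$ is only piecewise linear with a kink at $P_k=0$, so a Hessian-based argument fails there. The composition argument above avoids derivatives entirely, and it relies on two ingredients:
\begin{itemize}
\item the ordering of slopes, $\eta_k\le 1/\eta_k$, which holds because $\eta_k\in(0,1)$ by~\eqref{eq:etak_def} and is what makes $g_k$ concave rather than convex;
\item convexity of the feasible set, already noted in the text, which guarantees that convex combinations of feasible points stay in the domain where $\phi$ is applied.
\end{itemize}
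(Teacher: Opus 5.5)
Your proof is correct and follows the same route as the paper: each $g_k$ is concave, so the mean gateway power is concave and its composition with the convex decreasing map $x\mapsto x^{-1/3}$ is convex, and each separable expert term $(\Ps-P_k)^{-1/3}$ is convex on $P_k<\Ps$. The extra details you supply are the explicit representation $g_k=\min\{\eta_kP_k,P_k/\eta_k\}$ using $\eta_k<1/\eta_k$, and a direct verification of the composition rule. Both are steps the paper asserts without elaboration.
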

\begin{proof}
By~\eqref{eq:bi_ellk}, each $g_k$ is concave, so the gateway-side argument $\E_{\mathbf{s}}[P_0^{\mathsf{cmp}}]$ in~\eqref{eq:surrogate_def}, a nonnegative combination of the $g_k$, is concave, and composing it with the convex decreasing map $x\mapsto x^{-1/3}$ yields a convex gateway-side term.
As each expert-side term $\frac{\alpha_k}{\kappa}(\Ps-P_k)^{-1/3}$ is convex on $P_k<\Ps$, $\Tsur$ is a sum of convex terms and hence convex.
\end{proof}

Problem~$(\mathcal{P}_2)$ is thus a tractable convex surrogate of the
original average-latency problem~$(\mathcal{P}_1)$. The next section
derives the optimal solutions to the surrogate problems under
unidirectional and bidirectional WPT. Finally, we bound the
average-latency loss incurred by optimizing the surrogate rather than
the original objective in each WPT regime.

\section{Optimal WPT Power Control}
\label{sec:power_control}

In this section, we solve the surrogate problem~$(\mathcal{P}_2)$ to achieve the near-optimal WPT power control.
We then establish its performance guarantee with respect to the original problem~$(\mathcal{P}_1)$.
Finally, we discuss unidirectional WPT as a practical restriction.

\subsection{Optimal WPT Power Control}

In this subsection, we derive the optimal solution of problem~$(\mathcal{P}_2)$ and analyze its structure and performance.
As established in Lemma~\ref{lem:convexity}, the surrogate objective $\Tsur$ is convex.
At the same time, the feasible set of problem~$(\mathcal{P}_2)$ is also convex.
Hence, problem~$(\mathcal{P}_2)$ is a convex optimization problem.
Its global optimum can therefore be obtained from the KKT conditions, as stated in the following theorem.

\begin{theorem}[Optimal WPT Power Control]
\label{thm:bi_water_filling}
The unique optimal active-state power control for problem~$(\mathcal{P}_2)$
is given by
\begin{equation}
P_k^{\star} =\frac{\Ps}{\nu^{\star}}\Bigl(
  [\nu^{\star}-\chi\eta_k^{-\frac{3}{4}}]^+ -\min\!\left\{
  [\chi\eta_k^{\frac{3}{4}}-\nu^{\star}]^+,
  \eta_k\nu^{\star}
  \right\}\Bigr),
\label{eq:bi_pk_star}
\end{equation}
where $\chi\triangleq\left(\frac{\lffn}{\kappa\latt}\right)^{3/4}$.
The water level $\nu^{\star}>0$ is the unique solution to the following equation
\begin{equation}
\begin{aligned}
&\left[1+\sum_{k=1}^{K}(1-\alpha_k)\eta_k\right]\nu^{\star}
+\sum_{k=1}^{K}\alpha_k\eta_k
 [\nu^{\star}-\chi\eta_k^{-\frac{3}{4}}]^+\\
&\quad-\sum_{k=1}^{K}\frac{\alpha_k}{\eta_k}
 \min\!\left\{
 [\chi\eta_k^{\frac{3}{4}}-\nu^{\star}]^+,
 \eta_k\nu^{\star}
 \right\}=1.
\end{aligned}
\label{eq:bi_water_level_balance}
\end{equation}
\end{theorem}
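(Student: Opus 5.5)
The plan is to solve the convex problem~$(\mathcal{P}_2)$ through its KKT conditions, which Lemma~\ref{lem:convexity} and convexity of the feasible set make sufficient for global optimality. I would first handle the two constraints that cannot bind at an optimum. The upper bound $P_k<\Ps$ is never approached, because $(\Ps-P_k)^{-1/3}\to\infty$ there. The gateway-positivity constraint is likewise inactive, since $G^{-1/3}\to\infty$ as $G\to0^+$. Here I abbreviate the expected gateway power as
\[
G(\mathbf{p})\triangleq\Ps+\textstyle\sum_{k}\bigl[(1-\alpha_k)\eta_k\Ps+\alpha_k g_k(P_k)\bigr].
\]
Only the lower bound $P_k\ge-\eta_k\Ps$ needs a multiplier. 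The one nonsmooth point is $P_k=0$, where $g_k$ from~\eqref{eq:bi_ellk} has superdifferential $[\eta_k,1/\eta_k]$. Since $-G^{-1/3}$ composed with a concave function is handled by the subgradient chain rule, stationarity reads
\[
\tfrac{\alpha_k}{3}\Bigl[\tfrac{\lffn}{\kappa}(\Ps-P_k)^{-4/3}-\latt G^{-4/3}\,\xi_k\Bigr]-\lambda_k=0,\quad \xi_k\in\partial g_k(P_k),
\]
with $\lambda_k\ge0$ and $\lambda_k(P_k+\eta_k\Ps)=0$.

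Next I would solve this condition piece by piece, setting $\nu\triangleq\Ps/G$. When $\lambda_k=0$, it gives $\Ps-P_k=G\,\chi\,\xi_k^{-3/4}$, that is, $P_k=\frac{\Ps}{\nu}\bigl(\nu-\chi\xi_k^{-3/4}\bigr)$. This yields three cases.
\begin{itemize}
\item \emph{Donor}, $\xi_k=\eta_k$: we get $P_k=\frac{\Ps}{\nu}(\nu-\chi\eta_k^{-3/4})$, which is consistent with $P_k>0$ exactly when $\nu>\chi\eta_k^{-3/4}$.
\item \emph{Receiver}, $\xi_k=1/\eta_k$: we get $P_k=-\frac{\Ps}{\nu}(\chi\eta_k^{3/4}-\nu)$, which is consistent with $P_k<0$ exactly when $\nu<\chi\eta_k^{3/4}$. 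If this value falls below $-\eta_k\Ps=-\frac{\Ps}{\nu}\eta_k\nu$, the bound binds with $\lambda_k>0$; this produces the $\min\{\cdot,\eta_k\nu\}$ clipping.
\item \emph{Neutral}: for $\chi\eta_k^{3/4}\le\nu\le\chi\eta_k^{-3/4}$, a nonempty interval since $\eta_k<1$, the choice $P_k=0$ with $\xi_k=\chi^{4/3}\nu^{-4/3}\in[\eta_k,1/\eta_k]$ satisfies stationarity.
\end{itemize}
The three cases are mutually exclusive and together give exactly~\eqref{eq:bi_pk_star}. For the water level, I would substitute $g_k(P_k^\star)$ back into the definition $\Ps/\nu=G(\mathbf{p}^\star)$ and multiply by $\nu/\Ps$. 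The donor contribution is $\eta_kP_k$ and the receiver contribution is $P_k/\eta_k$, which reproduces~\eqref{eq:bi_water_level_balance}.

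The main obstacle is the existence and uniqueness of $\nu^\star$. Proving monotonicity of the left side of~\eqref{eq:bi_water_level_balance} directly in $\nu$ is awkward, because the $\min\{\cdot,\eta_k\nu\}$ term is not monotone. I would instead parametrize by $G$. From the case analysis, each $P_k(G)=\max\{\Ps-G\chi\xi_k^{-3/4},-\eta_k\Ps\}$, read piecewise, is continuous and nonincreasing in $G$. Since each $g_k$ is increasing, $F(G)\triangleq\Ps+\sum_k[(1-\alpha_k)\eta_k\Ps+\alpha_kg_k(P_k(G))]$ is nonincreasing in $G$. Hence $G-F(G)$ is continuous and strictly increasing. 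It is negative as $G\to0^+$, where $P_k\to\Ps^-$ and so $F\ge\Ps>0$, and it is positive as $G\to\infty$, where $F$ is bounded. This gives a unique fixed point $G^\star$, and therefore a unique $\nu^\star=\Ps/G^\star>0$.

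Finally, uniqueness of $\mathbf{p}^\star$ follows because the expert-side term of $\Tsur$ is strictly convex and separable in each $P_k$ (with $\alpha_k>0$), while the gateway-side term is convex, so $\Tsur$ is strictly convex. Any KKT point is then the unique global minimizer.
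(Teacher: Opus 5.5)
Your proof is correct. Its core is the paper's own derivation: coordinatewise KKT stationarity using the superdifferential $[\eta_k,1/\eta_k]$ of $g_k$ at the kink, the donor/neutral/receiver split in terms of $\nu=\Ps/G$, clipping at $-\eta_k\Ps$, and substituting back into $G=\Ps/\nu$ to obtain \eqref{eq:bi_water_level_balance}.

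Where you genuinely differ is in establishing existence and uniqueness of $\nu^\star$.

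\textbf{The paper's route} is indirect.
\begin{itemize}
\item It takes existence of a minimizer for granted. On a feasible set that is not closed, this really rests on the coercivity you point out.
\item It reads off $\nu^\star=\Ps/\bar P_0^{\mathsf{cmp}}(\mathbf p^\star)$ from the KKT conditions at that minimizer.
\item It gets uniqueness because any positive root of \eqref{eq:bi_water_level_balance}, combined with \eqref{eq:bi_pk_star}, is a KKT point. By strict convexity that point equals the unique minimizer, which forces the same $\nu$.
\end{itemize}

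\textbf{Your route} constructs the fixed point directly: you show that $G\mapsto G-F(G)$ is continuous, strictly increasing, and changes sign on $(0,\infty)$.

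The paper's argument is shorter because it recycles strict convexity. Yours is self-contained and does not presuppose a minimizer. It also yields a monotone scalar equation that justifies bisection, which is relevant to the paper's complexity remark.

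On your side remark about the $\nu$-form: the left side of \eqref{eq:bi_water_level_balance} is piecewise linear in $\nu$. Each capped-receiver piece contributes slope $-\alpha_k$, every other piece contributes a nonnegative slope, and $\sum_k\alpha_k=\kappa$. So its slope is everywhere at least $1+\sum_j(1-\alpha_j)\eta_j-\kappa$, which is at least $\delta$ from \eqref{eq:near_opt_power_condition}. Under the paper's standing condition $\delta>0$, the $\nu$-form is therefore monotone after all. Your $G$-parametrization has the advantage of needing no such assumption.

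Two small wording fixes:
\begin{itemize}
\item The chain rule you invoke is for the convex decreasing map $x\mapsto x^{-1/3}$ composed with the concave $G$. Your displayed stationarity condition is already the correct consequence of that.
\item At the exact clipping threshold the multiplier satisfies $\lambda_k\ge 0$, not $\lambda_k>0$.
\end{itemize}
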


\begin{proof}
See Appendix~\ref{app:proof_bi_water_filling}.
\end{proof}

Theorem~\ref{thm:bi_water_filling} divides the optimal power control into three modes.
Under the signed-power definition in~\eqref{eq:bi_pk_def}, positive, zero, and negative values of $P_k^{\star}$ correspond to donation, neutrality, and reception, respectively.
To make these three modes explicit, \eqref{eq:bi_pk_star} can be rewritten as
\begin{equation}
P_k^{\star}=
\begin{cases}
\dfrac{\Ps}{\nu^{\star}}
\bigl(\nu^{\star}-\chi\eta_k^{-\frac{3}{4}}\bigr),
& \nu^{\star}>\chi\eta_k^{-\frac{3}{4}},\\[1ex]
0,
& \chi\eta_k^{\frac{3}{4}}\le\nu^{\star}
  \le\chi\eta_k^{-\frac{3}{4}},\\[1ex]
-\dfrac{\Ps}{\nu^{\star}}
\min\!\left\{\chi\eta_k^{\frac{3}{4}}-\nu^{\star},\eta_k\nu^{\star}\right\},
& \nu^{\star}<\chi\eta_k^{\frac{3}{4}},
\end{cases}
\label{eq:bi_threshold_conditions}
\end{equation}
where the $\min\{\cdot\}$ term enforces the receive cap $\eta_k\Ps$ imposed by the gateway transmit-power limit.
The variation of $P_k^{\star}$ with the water level is illustrated in Fig.~\ref{fig:bi_power_transition}.

\begin{figure}[!t]
    \centering
    \includegraphics[width=0.85\columnwidth]{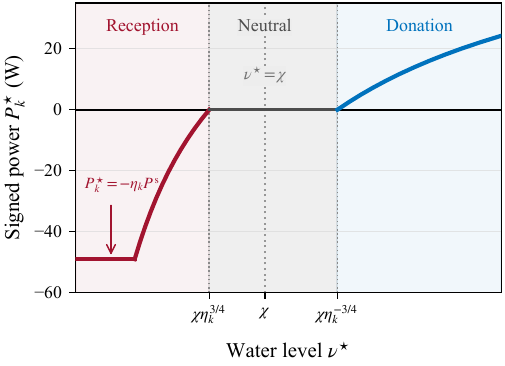}
    \caption{Three transfer modes of the optimal power control.}
    \label{fig:bi_power_transition}
\end{figure}

Figure~\ref{fig:bi_power_transition} shows the mode transition of an expert, driven by the water level $\nu^{\star}$.
As $\nu^{\star}$ increases, the expert moves from reception to neutrality and then to donation.
The reception and donation thresholds lie on opposite sides of $\chi$.
Therefore, the water level cannot activate both transfer directions simultaneously, which is formalized in the following Corollary~\ref{cor:bi_directional_consistency}.

\begin{corollary}[Directionally Consistent Optimal Transfers]
\label{cor:bi_directional_consistency}
All nonzero optimal active-state transfers have the same direction, written as
\[
    P_{k_1}^{\star}P_{k_2}^{\star}\ge0,
    \qquad \forall\, k_1,k_2\in\{1,\ldots,K\}.
\]
\end{corollary}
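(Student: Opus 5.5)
The plan is to read the sign of each $P_k^{\star}$ directly off the three-mode form~\eqref{eq:bi_threshold_conditions} of Theorem~\ref{thm:bi_water_filling}. The key point is that the same water level $\nu^{\star}$ is shared by all experts, while each expert's donation and reception thresholds lie on opposite sides of the single constant $\chi$.

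First, I will record the relevant sign facts. Since $\eta_k\in(0,1)$ by~\eqref{eq:etak_def} and $\chi>0$, we have
\[
\chi\eta_k^{\frac{3}{4}}<\chi<\chi\eta_k^{-\frac{3}{4}}\qquad\text{for every }k.
\]
Also $\Ps>0$ and $\nu^{\star}>0$. From~\eqref{eq:bi_threshold_conditions}:
\begin{itemize}
\item If $\nu^{\star}>\chi\eta_k^{-3/4}$, then $P_k^{\star}>0$.
\item If $\nu^{\star}<\chi\eta_k^{3/4}$, then $P_k^{\star}<0$. Here the $\min$ term is strictly positive, because both of its arguments are positive.
\item Otherwise $P_k^{\star}=0$.
\end{itemize}

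Next, I will argue by contradiction. Suppose $P_{k_1}^{\star}P_{k_2}^{\star}<0$, and without loss of generality take $P_{k_1}^{\star}>0>P_{k_2}^{\star}$. Donation by $k_1$ forces $\nu^{\star}>\chi\eta_{k_1}^{-3/4}>\chi$. Reception by $k_2$ forces $\nu^{\star}<\chi\eta_{k_2}^{3/4}<\chi$. These two bounds give $\chi<\nu^{\star}<\chi$, which is impossible. Hence every product $P_{k_1}^{\star}P_{k_2}^{\star}$ is nonnegative. The case $k_1=k_2$ is trivial, since the product is then a square.

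The argument is essentially immediate, and the only care needed is in the case split. In particular, I must make sure that the neutrality interval, taken with its closed endpoints, gives exactly $P_k^{\star}=0$, so that no sign is assigned incorrectly at the boundaries. I must also check that the reception branch never produces zero. This holds because $\min\{\chi\eta_k^{3/4}-\nu^{\star},\eta_k\nu^{\star}\}>0$ strictly in that branch.

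Uniqueness of $\nu^{\star}$, guaranteed by Theorem~\ref{thm:bi_water_filling}, ensures that a single common water level applies to all experts. That shared level is exactly what the argument relies on.
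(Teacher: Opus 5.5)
Your proposal is correct and follows the paper's proof essentially verbatim: both read the sign of $P_k^{\star}$ off the three-mode form~\eqref{eq:bi_threshold_conditions}. Both then note that donation requires $\nu^{\star}>\chi\eta_{k_1}^{-3/4}\ge\chi$ while reception requires $\nu^{\star}<\chi\eta_{k_2}^{3/4}\le\chi$, which one shared water level cannot satisfy at once. Your extra checks (the three cases are exhaustive and disjoint, and the $\min$ term is strictly positive in the reception branch) tighten the same argument without changing it.
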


\begin{figure*}[t]
    \centering
    \includegraphics[width=0.94\textwidth]{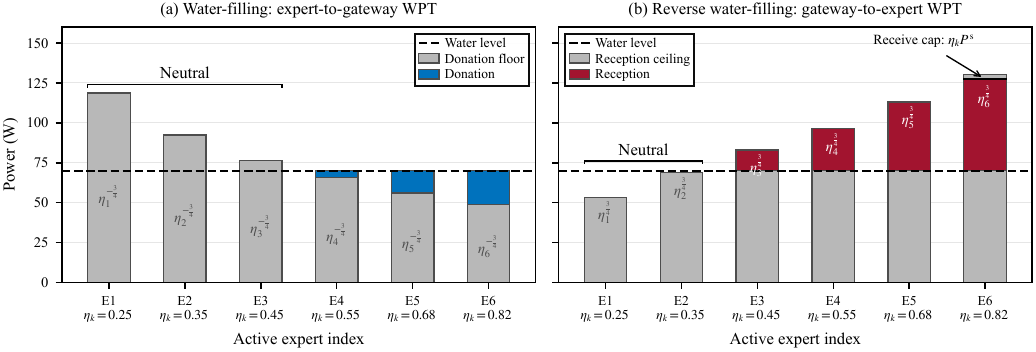}
    \begingroup
    \CLASSOPTIONconferencetrue
    \caption{Two-sided water-filling power control for SpaceWPT.}
    \label{fig:bi_water_filling}
    \label{fig:threshold_partition}
    \label{fig:eta_pattern_effect}
    \endgroup
\end{figure*}

\begin{proof}
By~\eqref{eq:bi_threshold_conditions}, donation by expert~$k_1$ requires
$\nu^{\star}>\chi\eta_{k_1}^{-3/4}\ge\chi$, whereas reception by expert~$k_2$
requires $\nu^{\star}<\chi\eta_{k_2}^{3/4}\le\chi$. Since $\nu^{\star}$ is
shared across all experts, these conditions cannot hold simultaneously. Hence, all
nonzero active-state transfers have the same direction.
\end{proof}

Corollary~\ref{cor:bi_directional_consistency} shows that power donation and reception cannot coexist under the optimal active-state control.
We therefore separate the signed solution in~\eqref{eq:bi_pk_star} into its donation and reception components, written as
\[
\begin{aligned}
(P_k^{\star})^+
&=\frac{\Ps}{\nu^{\star}}\bigl[\nu^{\star}-\chi\eta_k^{-3/4}\bigr]^+,\\
(-P_k^{\star})^+
&=\frac{\Ps}{\nu^{\star}}\min\!\left\{
\bigl[\chi\eta_k^{3/4}-\nu^{\star}\bigr]^+,\eta_k\nu^{\star}\right\}.
\end{aligned}
\]
The two directional components in~\eqref{eq:bi_pk_star} exhibit distinct power-control structures.
For power donation, the optimal power is proportional to the gap between the water level $\nu^{\star}$ and the efficiency-dependent floor $\chi\eta_k^{-3/4}$.
This level-minus-floor relation follows the conventional \emph{water-filling} structure.
For power reception, the optimal power is proportional to the capped gap between the ceiling $\chi\eta_k^{3/4}$ and the water level.
This ceiling-minus-level relation gives rise to a \emph{reverse water-filling} structure.
Together, they form the two-sided water-filling structure illustrated in Fig.~\ref{fig:bi_water_filling} and analyzed in the following Remark~\ref{rem:bi_partition_structure}.

\begin{remark}[Two-Sided Water-Filling Power Control]
\label{rem:bi_partition_structure}
The optimal WPT control extends conventional water-filling to both sides of the water level $\nu^{\star}$.
Expert~$k$ donates power when $\nu^{\star}$ exceeds its floor and receives power when $\nu^{\star}$ falls below its ceiling.
A higher WPT efficiency $\eta_k$ lowers the floor and raises the ceiling, thereby increasing the transferred power between expert $k$ and the gateway.
\end{remark}

\begin{remark}[Efficiency-Dependent Neutral Region]
The reception ceiling and donation floor define a neutral region of width
$\chi(\eta_k^{-3/4}-\eta_k^{3/4})$ for expert~$k$.
Since this region enlarges with $\eta_k$ decreases, experts with lower-efficiency WPT links are more likely to be neutral.
\end{remark}

We now combine the active- and idle-state results to obtain the complete WPT control.
Theorem~\ref{thm:bi_water_filling} gives the optimal signed power for an active expert, while Lemma~\ref{lem:bi_idle_structure} gives the optimal powers for an idle expert.
Using the mapping in~\eqref{eq:bi_active_from_pk}, the optimal WPT powers are given by
\begin{equation}
    \left[P_{k\to0}^{\star}(s_k),P_{0\to k}^{\star}(s_k)\right]
    =
    \begin{cases}
        \left[\Ps,0\right], & s_k=0,\\[0.4ex]
        \bigl[(P_k^{\star})^+,\frac{(-P_k^{\star})^{+}}{\eta_k}\bigr], & s_k=1.
    \end{cases}
    \label{eq:bi_active_star}
\end{equation}

Averaging~\eqref{eq:bi_active_star} over $s_k$ yields the long-term average transfer powers over inference rounds.
The long-term transfer power from expert~$k$ to the gateway is given by
\begin{equation}
    \Pbar_{k\to 0}^{\star}
    \triangleq
    \E_{s_k}\!\left[P_{k\to 0}^{\star}(s_k)\right]
    = (1-\alpha_k)\,\Ps+\alpha_k\,(P_k^{\star})^{+},
    \label{eq:bi_avg_tx_expert_to_center}
\end{equation}
The long-term transfer power from the gateway to expert~$k$ is given by
\begin{equation}
    \Pbar_{0\to k}^{\star}
    \triangleq
    \E_{s_k}\!\left[P_{0\to k}^{\star}(s_k)\right]
    =\alpha_k\,\frac{(-P_k^{\star})^{+}}{\eta_k}.
    \label{eq:bi_avg_tx_center_to_expert}
\end{equation}

\begin{figure*}[!t]
    \centering
    \subfigure[Gateway-side power control]{%
        \includegraphics[width=0.43\textwidth]{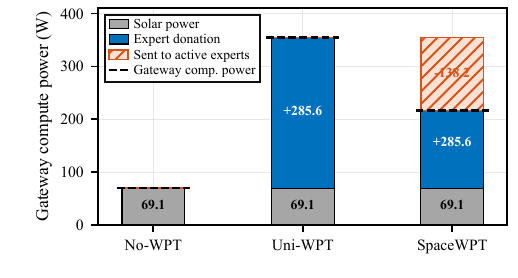}%
        \label{fig:gain_decomposition_power}}
    \hfill
    \subfigure[E2E per-token latency]{%
        \includegraphics[width=0.43\textwidth]{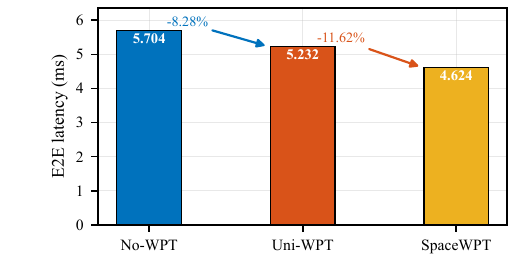}%
        \label{fig:gain_decomposition_latency}}
    \centeredfigurecaption{Power flow and the resulting E2E latency reduction
    for Mixtral.}{fig:gain_decomposition}
\end{figure*}

\subsection{Analysis of the Optimal Solution}

This subsection evaluates the derived WPT control from three perspectives:
solution optimality, complexity, and practical deployment.
We first establish its near-optimality and then quantify its computational advantage over directly solving problem~$(\mathcal{P}_0)$.
Finally, we consider a practical deployment with unidirectional WPT.

We first evaluate $\mathbf{p}^{\star}$ against the optimal solution of the
original problem~$(\mathcal{P}_0)$.
The design principles established in Lemmas~\ref{lem:bi_idle_structure}
and~\ref{lem:no_counter_flow} reduce problem~$(\mathcal{P}_0)$ to
problem~$(\mathcal{P}_1)$ without compromising optimality.
Problem~$(\mathcal{P}_2)$ then replaces the average latency in
problem~$(\mathcal{P}_1)$ with the surrogate $\Tsur$.
The optimality gap between the original one and the transformed one is bounded by the following Corollary~\ref{cor:near_opt}.

\begin{corollary}[Near-optimality bound]
\label{cor:near_opt}
The optimality gap of the solution $\mathbf{p}^{\star}$ satisfies
\begin{equation}
0\le \bar T(\mathbf{p}^{\star})-\min_{\mathbf{p}}\bar T(\mathbf{p})
\le \varepsilon_0,
    \label{eq:near_opt}
\end{equation}
where the constant $\varepsilon_0$ is given in~\eqref{eq:near_opt_constant}
and depends only on the system parameters.
\end{corollary}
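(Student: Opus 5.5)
The plan is the standard surrogate-sandwich argument. Write $\mathbf{p}^{\circ}\in\arg\min_{\mathbf{p}}\bar T(\mathbf{p})$ for a minimizer of $(\mathcal{P}_1)$. The lower bound $0\le\bar T(\mathbf{p}^{\star})-\bar T(\mathbf{p}^{\circ})$ is immediate, because $\mathbf{p}^{\star}$ is feasible for $(\mathcal{P}_1)$. For the upper bound I would establish a uniform two-sided estimate
\[
-\varepsilon_{\mathsf{L}}\le \bar T(\mathbf{p})-\Tsur(\mathbf{p})\le \varepsilon_{\mathsf{U}}
\]
over the feasible set, with nonnegative constants depending only on system parameters. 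Combining it with the optimality of $\mathbf{p}^{\star}$ for $(\mathcal{P}_2)$ (Theorem~\ref{thm:bi_water_filling}) gives
\[
\bar T(\mathbf{p}^{\star})\le \Tsur(\mathbf{p}^{\star})+\varepsilon_{\mathsf{U}}\le \Tsur(\mathbf{p}^{\circ})+\varepsilon_{\mathsf{U}}\le \bar T(\mathbf{p}^{\circ})+\varepsilon_{\mathsf{U}}+\varepsilon_{\mathsf{L}},
\]
so I would set $\varepsilon_0=\varepsilon_{\mathsf{U}}+\varepsilon_{\mathsf{L}}$.

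I would bound the two approximations in Proposition~\ref{prop:surrogate} separately.

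\emph{Gateway term.} Since $x\mapsto x^{-1/3}$ is convex, Jensen gives $\E_{\mathbf{s}}[\Tatt(\mathbf{p};\mathbf{s})]\ge \latt(\E_{\mathbf{s}}[P_0^{\mathsf{cmp}}])^{-1/3}$, so the surrogate underestimates this term. For the reverse direction I would bound $P_0^{\mathsf{cmp}}(\mathbf{p};\mathbf{s})$ in every state. From~\eqref{eq:P0_cmp_instantaneous}, each idle expert contributes $\eta_k\Ps\ge0$, and each active one contributes $g_k(P_k)\ge -\Ps$, since $P_k\ge-\eta_k\Ps$. Only $\kappa$ experts are active, so $P_0^{\mathsf{cmp}}$ lies in an interval $[m,M]$ with
\[
m=\Ps\Bigl(1-\kappa+\textstyle\min_{|\mathcal U|=\kappa}\sum_{k\notin\mathcal U}\eta_k\Bigr),\qquad M=\Ps\Bigl(1+\textstyle\sum_k\eta_k\Bigr).
\]
This is where the sufficient condition~\eqref{eq:near_opt_power_condition}, referenced before $(\mathcal{P}_1)$, should enter: it keeps $m>0$. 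I would then apply a reverse-Jensen (Kantorovich-type) bound for the convex function $x^{-1/3}$ on $[m,M]$, which gives a gap of at most
\[
\varepsilon_{\mathsf{att}}=\latt\max_{x\in[m,M]}\Bigl[\tfrac{M-x}{M-m}m^{-1/3}+\tfrac{x-m}{M-m}M^{-1/3}-x^{-1/3}\Bigr].
\]

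\emph{Expert term.} By~\eqref{eq:ppswor_pmf} exactly $\kappa$ experts are active, and $\E[s_k]=\alpha_k$. Hence $\sum_k\frac{\alpha_k}{\kappa}\tau_k=\E_{\mathbf{s}}\bigl[\frac1\kappa\sum_k s_k\tau_k\bigr]$, where $\tau_k=T_k^{\mathsf{cmm}}+\lffn(\Ps-P_k)^{-1/3}$. A maximum dominates an average, so $T^{\mathsf{exp}}\ge$ the surrogate expert term; this is the sign that produces $\varepsilon_{\mathsf{U}}$. For the gap, the max minus the mean over $\kappa$ active values is at most $\frac{\kappa-1}{\kappa}(\max_k\tau_k-\min_k\tau_k)$. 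I would bound $\tau_k$ using $P_k\in[-\eta_k\Ps,\Ps)$.

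\emph{Main obstacle.} The expert-side bound blows up as $P_k\to\Ps$, because the constraint is open. I would first show that at the optimal points the expert compute power is bounded away from zero. For $\mathbf{p}^{\star}$, the closed form~\eqref{eq:bi_pk_star} gives $\Ps-P_k^{\star}\ge\Ps\chi\eta_k^{-3/4}/\nu^{\star}$, and $\nu^{\star}$ is bounded by~\eqref{eq:bi_water_level_balance}, using $\nu^{\star}\le 1/[1+\sum_k(1-\alpha_k)\eta_k]$ when no expert receives. For $\mathbf{p}^{\circ}$, I would argue that any $P_k$ beyond the same threshold can be lowered to reduce $\bar T$, since the expert-side marginal cost dominates there. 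Restricting both the sandwich and the bound to this compact box $P_k\in[-\eta_k\Ps,\bar P_k]$ gives a finite $\varepsilon_{\mathsf{exp}}$. I would then take $\varepsilon_0=\varepsilon_{\mathsf{att}}+\varepsilon_{\mathsf{exp}}$ and record it as~\eqref{eq:near_opt_constant}.
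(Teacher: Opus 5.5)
Your proposal is correct and is, in its essentials, the paper's proof. The surrogate is a pointwise lower bound on $\bar T$: Jensen handles the gateway term and mean $\le$ max handles the expert term. The gap $\bar T-\Tsur$ is then bounded at $\mathbf p^{\star}$, and the sandwich $\bar T(\mathbf p^\star)\le\Tsur(\mathbf p^\star)+\varepsilon_0\le\Tsur(\mathbf p^\circ)+\varepsilon_0\le\bar T(\mathbf p^\circ)+\varepsilon_0$ closes the argument. Three remarks follow.

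\textbf{The argument for $\mathbf p^{\circ}$ is unnecessary.} Your own sign observations give $\varepsilon_{\mathsf L}=0$ on the entire feasible set. So your ``main obstacle'' concerning $\mathbf p^{\circ}$ disappears, because an upper gap is only ever used at $\mathbf p^{\star}$, which is exactly how the paper proceeds. Drop the compact-box argument for $\mathbf p^{\circ}$. As sketched, it is not justified: it invokes ``the same threshold'' for the true objective, whose expert term is a maximum and whose gateway term is an expectation. Nothing in the proof needs it.

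\textbf{Your gateway bound uses a different technique, and it still yields the stated constant.} You use the chord of $x^{-1/3}$ on $[m,M]$. The paper instead uses a second-order Taylor expansion about the mean, with $f''\le\frac49(\Ps\delta)^{-7/3}$, together with the quarter-squared-range variance bound over the state-dependent range $\kappa(1+\eta_{\max})\Ps$. Your $m$ is exactly $\Ps\delta$. The interpolation-error estimate gives $\varepsilon_{\mathsf{att}}\le\latt(M-m)^2m^{-7/3}/18$, and $M-m=\Ps\bigl(\kappa+\sum_{i\le\kappa}\eta_{[i]}\bigr)\le\kappa(1+\eta_{\max})\Ps$. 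Hence your term never exceeds the first term of \eqref{eq:near_opt_constant}, and the chord route needs only the uniform interval $[m,M]$ rather than a variance estimate.

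\textbf{Small slip in the expert-power bound.} The inequality $\Ps-P_k^\star\ge\Ps\chi\eta_k^{-3/4}/\nu^\star$ holds, with equality, only for donors. For neutral or receiving experts you have $\nu^\star\le\chi\eta_k^{-3/4}$, so that right-hand side is at least $\Ps$, while $\Ps-P_k^\star$ can be as small as $\Ps$; the claimed inequality then fails. The correct statement is $\Ps-P_k^\star\ge\Ps\min\{1,\chi A\eta_k^{-3/4}\}$ with $A=1+\sum_j(1-\alpha_j)\eta_j$, where $\nu^\star\le A^{-1}$ is used only when a donor exists. This is \eqref{eq:proof_expert_power_bounds}. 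Together with $\Ps-P_k^\star\le(1+\eta_k)\Ps$, it produces the expert part of $\varepsilon_0$.
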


\begin{proof}
See Appendix~\ref{app:near_opt}.
\end{proof}

We next compare the computational complexity of the proposed solution
with that of directly solving problem~$(\mathcal{P}_0)$.
\begin{remark}[Computational Complexity]
Directly solving problem~$(\mathcal{P}_0)$ involves $4K$ power variables
and up to $\binom{K}{\kappa}$ activation patterns.
Theorem~\ref{thm:bi_water_filling} reduces the optimization to a
single water-level equation in $\nu^\star$, whose solution determines
all $K$ powers via~\eqref{eq:bi_pk_star}.
This water-level equation can be solved in $\mathcal{O}(K\log K)$ operations.
\end{remark}

Finally, we consider unidirectional WPT as a practical restriction
of the general control.
The gateway executes the attention block for every token, whereas only
$\kappa$ of the $K$ experts are activated.
This workload asymmetry motivates the expert-to-gateway restriction
$P_k\ge0$, which reduces onboard WPT hardware requirements and control
overhead.
The optimal control under this restriction is stated in the
following corollary.

\begin{corollary}[Power Control for Unidirectional-WPT Case]
When $P_k\ge0$, the unique optimal active-state power of expert~$k$ is
\begin{equation}
P_k^{\mathsf{uni}\star}
=
\frac{\Ps}{\nu_{\mathsf{uni}}^\star}
\left[
\nu_{\mathsf{uni}}^\star-\chi\eta_k^{-3/4}
\right]^+,
\label{eq:uni_pk_star}
\end{equation}
where $\nu_{\mathsf{uni}}^\star>0$ is uniquely determined by
\begin{equation}
\begin{aligned}
&\left[1+\sum_{k=1}^{K}(1-\alpha_k)\eta_k\right]
\nu_{\mathsf{uni}}^\star\\
&\quad+\sum_{k=1}^{K}\alpha_k\eta_k
\left[
\nu_{\mathsf{uni}}^\star-\chi\eta_k^{-3/4}
\right]^+
=1.
\end{aligned}
\label{eq:uni_water_level_balance}
\end{equation}
\end{corollary}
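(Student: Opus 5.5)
The plan is to treat the unidirectional case as problem~$(\mathcal{P}_2)$ with each feasible interval shrunk to $0\le P_k<\Ps$, and then repeat the KKT argument behind Theorem~\ref{thm:bi_water_filling}. On this restricted set $g_k(P_k)=\eta_kP_k$ is linear. The gateway-side argument becomes the affine function
\[
S(\mathbf{p})\triangleq \Ps+\textstyle\sum_k(1-\alpha_k)\eta_k\Ps+\sum_k\alpha_k\eta_kP_k ,
\]
and it is strictly positive, so the positivity constraint of $(\mathcal{P}_2)$ is inactive. The feasible set is a box, which is convex. By Lemma~\ref{lem:convexity} the objective is convex, and it is in fact strictly convex because each expert term $\frac{\alpha_k}{\kappa}\lffn(\Ps-P_k)^{-1/3}$ is strictly convex in its own coordinate. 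Strict convexity gives uniqueness once existence is shown. Existence holds because the objective blows up as $P_k\to\Ps$, so the minimizer lies in a compact sub-box. KKT conditions are then necessary and sufficient.

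Next I write the stationarity condition with a multiplier $\lambda_k\ge0$ for $P_k\ge0$; the upper bound is never active. Differentiating gives
\[
-\tfrac{1}{3}\latt S^{-4/3}\alpha_k\eta_k+\tfrac{\alpha_k}{3\kappa}\lffn(\Ps-P_k)^{-4/3}-\lambda_k=0 ,
\]
with $\lambda_kP_k=0$. Cancelling $\alpha_k/3$ and setting the water level $\nu\triangleq \Ps/S$, an interior solution $P_k>0$ satisfies
\[
(\Ps-P_k)^{-4/3}=\kappa\tfrac{\latt}{\lffn}\,\eta_k\,S^{-4/3}.
\]
Raising both sides to the power $-3/4$ gives $\Ps-P_k=\chi\,\eta_k^{-3/4}S$, and hence
\[
P_k=\tfrac{\Ps}{\nu}\bigl(\nu-\chi\eta_k^{-3/4}\bigr).
\]
This value is positive exactly when $\nu>\chi\eta_k^{-3/4}$. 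When that fails, $\lambda_k\ge0$ forces $P_k=0$. Together the two cases give~\eqref{eq:uni_pk_star}. This matches the donation branch of~\eqref{eq:bi_threshold_conditions} with the reception branch removed. It also satisfies $P_k<\Ps$ automatically, since $\Ps-P_k=\chi\eta_k^{-3/4}S>0$.

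To fix $\nu$, I substitute $P_k$ back into the definition $S=\Ps/\nu$. Multiplying through by $\nu/\Ps$ yields
\[
\Bigl[1+\textstyle\sum_k(1-\alpha_k)\eta_k\Bigr]\nu+\sum_k\alpha_k\eta_k\bigl[\nu-\chi\eta_k^{-3/4}\bigr]^+=1 ,
\]
which is exactly~\eqref{eq:uni_water_level_balance}. The left-hand side is continuous, equals $0$ at $\nu=0$, is strictly increasing because its linear coefficient is at least $1$, and is unbounded. It therefore has a unique root $\nu_{\mathsf{uni}}^\star>0$.

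The main obstacle is bookkeeping rather than analysis. The only delicate step is the circularity, since $S$ depends on all the $P_k$. I handle it by parametrizing with $\nu$ and then checking that the fixed-point equation is monotone, as above. Strict convexity then guarantees that this KKT point is the unique optimum.
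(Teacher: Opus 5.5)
Your proposal is correct and takes essentially the paper's route: the KKT/water-filling argument behind Theorem~\ref{thm:bi_water_filling} (Appendix~\ref{app:proof_bi_water_filling}), restricted to the donation branch, where the one-sided multiplier for $P_k\ge0$ replaces the two-sided kink condition (so $P_k=0$ exactly when $\nu\le\chi\eta_k^{-3/4}$) and substituting back into $S=\Ps/\nu$ gives \eqref{eq:uni_water_level_balance}. Your direct monotonicity argument for the unique root $\nu_{\mathsf{uni}}^\star$ replaces the paper's appeal to strict convexity slightly more explicitly, but it is not a different approach.
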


The unidirectional control retains the expert-to-gateway water-filling branch
of the general solution, illustrated in Fig.~\ref{fig:bi_water_filling}(a).
We use it as a practical baseline in Section~V to quantify the benefit of
removing the directional restriction.

\section{Experimental Results}
\label{sec:experiments}

\subsection{Experimental Setup}
\label{sec:exp_setup}

\subsubsection{SpaceDC Configuration}

The simulated SpaceDC comprises one gateway satellite and $K$
co-orbiting expert satellites.
Under the HCW relative-motion model in~\eqref{eq:hcw_sol}, expert~$k$
follows a zero-offset closed relative orbit with
$\rho_k=\rho$, $y_k^{\mathsf{off}}=0$, and
$\phi_k=2\pi(k-1)/K$.
The resulting gateway--expert distance varies from $\rho$ to $2\rho$ over
one orbital period.
Unless otherwise specified, we set $\rho=50$\,m and evaluate each scheme at
four uniformly spaced orbital snapshots.

Each gateway--expert pair uses separate FSO and laser-based WPT links for data and power
transfer, respectively.
We set $R(d_k)=100$\,Gbps and
$Q^{\mathsf{dl}}=Q^{\mathsf{ul}}=4096\times16$ bits.
Under the communication model in~\eqref{eq:one_way_tcmm}, this fixed rate
leaves the propagation delay $d_k/c$ as the only distance-dependent
communication term.
The $K$ dedicated WPT links form a star centered at the gateway.
We use a saturated approximation of the end-to-end efficiency
in~\eqref{eq:eta_end_to_end}, given by
$\eta_k=\eta_{\max}\min\{1,(100\,\mathrm{m}/d_k)^2\}$, with
$\eta_{\max}=0.689$ by default.

\subsubsection{GPU Configuration}
The per-satellite solar-power budget is set to $\Ps=69.09$\,W, and every node
is assigned the same GPU profile.
We calibrate the DVFS model in~\eqref{eq:dvfs} using power--frequency
measurements collected on an NVIDIA A100 GPU, with a reference
clock frequency of $990$\,MHz at $69.09$\,W.
The coefficient $\gamma$ incorporates the conversion from GPU clock frequency
to effective computation speed.
This calibration is shared by the gateway and all expert satellites.
Unless otherwise specified, these settings remain fixed, and only the
parameter under study is varied.

\subsubsection{MoE Configuration}

We evaluate Mixtral-8x7B-Instruct and Phi-3.5-MoE-Instruct on the
SpaceDC~\cite{jiang2024mixtral,abdin2024phi3}.
Both models contain $L=32$ MoE layers and adopt native Top-$2$ routing.
Mixtral has $K=8$ experts per layer, whereas Phi has $K=16$.
Both models are evaluated at a $221$-token reference context, obtained by
rounding the mean decoding context of $221.13$ tokens measured over the full
Mixtral--HumanEval workload.

\begin{figure}[!tb]
  \centering
  \makebox[\columnwidth][c]{%
    \subfigure[Mixtral-8x7B]{%
      \includegraphics[width=0.51\columnwidth]{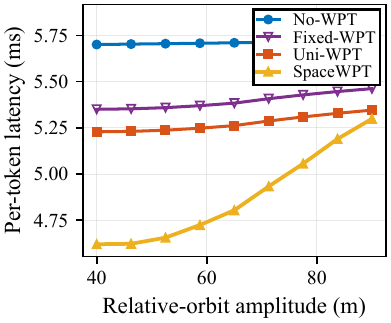}%
      \label{fig:performance_formation_mixtral}}%
    \hspace{0.01\columnwidth}%
    \subfigure[Phi-3.5-MoE]{%
      \includegraphics[width=0.51\columnwidth]{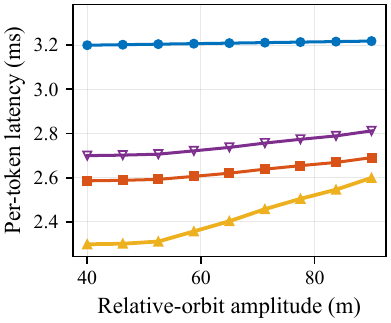}%
      \label{fig:performance_formation_phi}}%
  }
  \caption{E2E latency versus the relative-orbit amplitude of the SpaceDC.}
  \label{fig:performance_formation}
\end{figure}

\begin{figure}[!tb]
  \centering
  \makebox[\columnwidth][c]{%
    \subfigure[Mixtral-8x7B]{%
      \includegraphics[width=0.51\columnwidth]{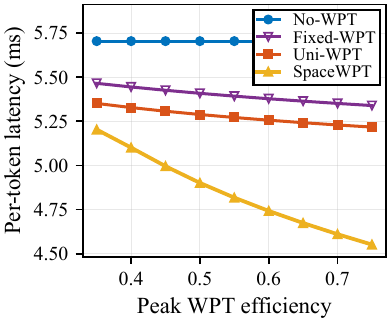}%
      \label{fig:performance_efficiency_mixtral}}%
    \hspace{0.01\columnwidth}%
    \subfigure[Phi-3.5-MoE]{%
      \includegraphics[width=0.51\columnwidth]{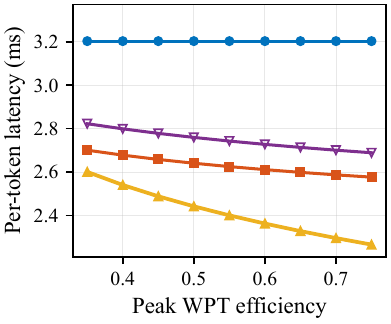}%
      \label{fig:performance_efficiency_phi}}%
  }
  \caption{E2E latency versus peak WPT efficiency.}
  \label{fig:performance_efficiency}
\end{figure}

For each model, we process all $164$ HumanEval prompts using batch size one
and decode $128$ tokens per prompt~\cite{chen2021codex}.
We record the Top-$2$ active expert pair at every MoE layer and decoding step.
Since WPT control changes only the compute power and latency, HumanEval
supplies the activation workload rather than an accuracy metric.

For layer~$\ell$, we write $\alpha_{\ell,k}$ for the empirical marginal
activation probability of expert~$k$, corresponding to $\alpha_k$
in~\eqref{eq:alpha_k_def}.
At every layer and orbital snapshot, we compute the analytical
Uni-WPT and SpaceWPT controls using
$\{\alpha_{\ell,k}\}$ and the snapshot-dependent WPT efficiencies
$\{\eta_k\}$.
We then evaluate all schemes with the recorded joint expert-pair frequencies
using the original average-latency model in~\eqref{eq:avg_latency_physical_def},
which preserves the active-expert maximum.
The reported E2E per-token latency aggregates the modeled latency across layers
and is averaged over expert activations and orbital positions.
Thus, the marginal model is used only to design the controls, whereas the
reported latency retains the observed Top-$2$ activation correlations.

\subsubsection{Benchmarking Schemes}

We compare the proposed analytical controls with a system without WPT and a
fixed-ratio heuristic.
\begin{itemize}
    \item \emph{No-WPT:} Each satellite uses only its local solar-power
    budget for computation.
    \item \emph{Fixed-WPT:} Each idle expert transmits a fixed $50\%$ of
    its solar-power budget to the gateway, whereas active experts perform no
    WPT.
    \item \emph{Uni-WPT:} This unidirectional restriction of SpaceWPT
    allows power to flow only from the experts to the gateway. The active-state powers follow the water-filling control
    in~\eqref{eq:uni_pk_star}, while idle experts follow
    Lemma~\ref{lem:bi_idle_structure}.
    \item \emph{SpaceWPT:} Power may flow in either direction between
    the gateway and each expert, following the two-sided water-filling control
    in Theorem~\ref{thm:bi_water_filling}.
\end{itemize}
All schemes use the same model deployment, routing decisions, orbital
snapshots, and communication and computation settings; they differ only in
the WPT directions and transmit powers.

\subsection{SpaceWPT Performance}
\label{sec:exp_structure}

At the default operating point, Fig.~\ref{fig:gain_decomposition} shows that
the analytical controls reduce the E2E latency of Mixtral in two stages.
No-WPT incurs $5.7041$\,ms/token, whereas Uni-WPT reduces the latency
to $5.2319$\,ms/token, representing an $8.28\%$ improvement.
SpaceWPT further reduces it to $4.6238$\,ms/token, which is $11.62\%$
below Uni-WPT and $18.94\%$ below No-WPT.

The Uni-WPT gain comes entirely from optimal idle-expert donation.
As characterized in Lemma~\ref{lem:bi_idle_structure}, each idle expert
transmits its full solar-power budget, delivering $285.62$\,W on average to
the gateway after WPT loss and raising its compute power from $69.09$ to
$354.71$\,W, as shown in Fig.~\ref{fig:gain_decomposition_power}.
At the default operating point, all active experts remain neutral, with
$P_k^\star=0$, so their computation latency is unchanged from No-WPT.
The resulting $8.28\%$ reduction therefore stems solely from the shorter
gateway attention latency.

SpaceWPT further reduces the latency by reallocating part of the gateway
power to the active experts.
Compared with Uni-WPT, the gateway allocates an average of $138.18$\,W
to active-expert WPT and retains $216.53$\,W for attention computation, as
shown in Fig.~\ref{fig:gain_decomposition_power}.
This reallocation increases the attention-latency component by
$0.1165$\,ms/token but reduces the critical-expert component by
$0.7246$\,ms/token, yielding the net E2E latency reduction of $11.62\%$ in
Fig.~\ref{fig:gain_decomposition_latency}.

\subsection{Effects of System Parameters}
\label{sec:exp_performance}

This subsection compares the four schemes under variations in system parameters.
The evaluation covers WPT-link conditions through the relative-orbit amplitude
and peak WPT efficiency, and onboard computing resources through the
per-satellite solar-power budget and GPU reference frequency.

The gain of SpaceWPT is governed by the per-link WPT
efficiencies, which depend on both the formation size and the peak transfer
efficiency. As shown in Fig.~\ref{fig:performance_formation}, increasing
$\rho$ from $40$ to $90$\,m lengthens the gateway--expert links and reduces
the gain over No-WPT from $18.95\%$ to $7.35\%$ for Mixtral and from
$28.13\%$ to $19.22\%$ for Phi. Conversely,
Fig.~\ref{fig:performance_efficiency} shows that increasing $\eta_{\max}$
from $0.35$ to $0.75$ raises the gain from $8.78\%$ to $20.21\%$ for
Mixtral and from $18.80\%$ to $29.25\%$ for Phi. Both trends follow from
the power delivered for computation: longer links increase transfer loss,
whereas a higher peak efficiency reduces this loss in both WPT directions.
These results indicate that compact formations and high-efficiency laser-based WPT
hardware directly strengthen the latency advantage of SpaceWPT.

\begin{figure}[!tb]
  \centering
  \makebox[\columnwidth][c]{%
    \subfigure[Mixtral-8x7B]{%
      \includegraphics[width=0.51\columnwidth]{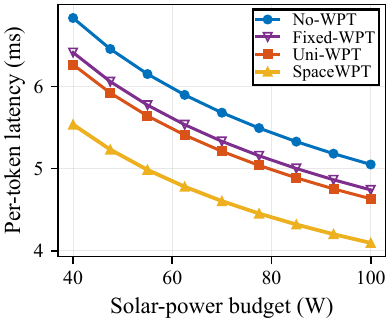}%
      \label{fig:performance_solar_mixtral}}%
    \hspace{0.01\columnwidth}%
    \subfigure[Phi-3.5-MoE]{%
      \includegraphics[width=0.51\columnwidth]{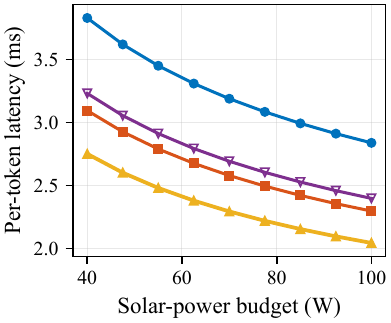}%
      \label{fig:performance_solar_phi}}%
  }
  \caption{E2E latency versus the per-satellite solar-power budget.}
  \label{fig:performance_solar}
\end{figure}

\begin{figure}[!tb]
  \centering
  \makebox[\columnwidth][c]{%
    \subfigure[Mixtral-8x7B]{%
      \includegraphics[width=0.51\columnwidth]{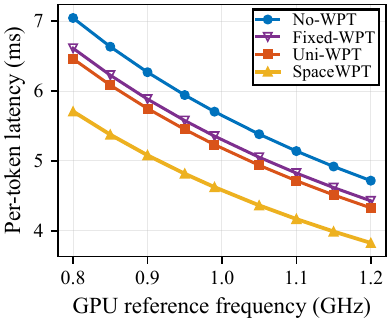}%
      \label{fig:performance_gpu_mixtral}}%
    \hspace{0.01\columnwidth}%
    \subfigure[Phi-3.5-MoE]{%
      \includegraphics[width=0.51\columnwidth]{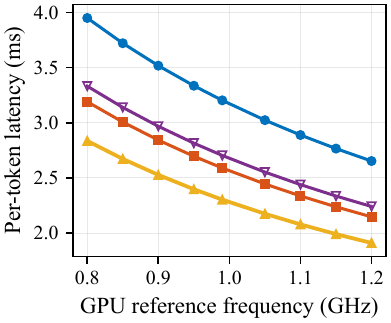}%
      \label{fig:performance_gpu_phi}}%
  }
  \caption{E2E latency versus the GPU reference frequency.}
  \label{fig:performance_gpu}
\end{figure}

Increasing the solar-power budget or GPU frequency reduces computation
latency across the SpaceDC. As shown in Figs.~\ref{fig:performance_solar}
and~\ref{fig:performance_gpu}, this lowers the E2E latency for all four
schemes, while SpaceWPT provides a further reduction throughout both
sweeps---about $19\%$ for Mixtral and $28\%$ for Phi relative to No-WPT. The
two mechanisms are complementary.
Specifically, additional solar power or a higher GPU
frequency accelerates computation at every satellite, whereas WPT
redistributes the available power toward the gateway and active experts
according to instantaneous demand.

\section{Concluding Remarks}
\label{sec:conclusion}

This work advocates inter-satellite WPT to support low-latency distributed inference in SpaceDCs.
The key feature lies in introducing a new degree of freedom for space computing, i.e., WPT-enabled energy migration, to cope with imbalanced computing loads.
Such a power-sharing policy redistributes energy according to demand under spatially uniform sunlight exposure.
The developed principles for migration direction and activation-based WPT control provide theoretical guidance for the development of emerging SpaceDCs in dawn--dusk SSOs.

This work opens a new frontier for integrating WPT into distributed computing in space.
One direction is to extend the WPT design to more general topologies, where direct power exchange between expert satellites allows more flexible power redistribution according to computation demands and link efficiencies.
Another direction is to unify the separate laser-based WPT and communication links into a single optical link that supports both energy delivery and data transmission.
Such integration could reduce onboard hardware requirements while calling for joint designs that balance WPT efficiency and communication performance.

\begingroup
\setlength{\parskip}{3pt plus 1pt minus 1pt}
\setlength{\abovedisplayskip}{10pt plus 2pt minus 2pt}
\setlength{\belowdisplayskip}{10pt plus 2pt minus 2pt}
\setlength{\abovedisplayshortskip}{5pt plus 1pt minus 1pt}
\setlength{\belowdisplayshortskip}{8pt plus 2pt minus 2pt}
\section{Appendix}

\subsection{Proof of \Cref{prop:surrogate}}
\label{app:surrogate}

The surrogate~\eqref{eq:surrogate_def} follows from a certainty-equivalent approximation of the two random components of $\bar T(\mathbf{p})$.

\textit{Gateway-side.} The gateway latency $\latt(P_0^{\mathsf{cmp}})^{-1/3}$ is random through the activation $\mathbf{s}$. Replacing the random power $P_0^{\mathsf{cmp}}$ by its mean inside the nonlinearity yields the certainty-equivalent term $\latt(\E_{\mathbf{s}}[P_0^{\mathsf{cmp}}])^{-1/3}$.
Since expert~$k$ contributes $g_k(P_k)$ to the gateway when active and $\eta_k\Ps$ when idle, with activation probability $\alpha_k$,
\begin{equation*}
\E_{\mathbf{s}}[P_0^{\mathsf{cmp}}]
=\Ps+\textstyle\sum_{k=1}^{K}\bigl[(1-\alpha_k)\eta_k\Ps+\alpha_k g_k(P_k)\bigr],
\end{equation*}
which is the gateway-side term of~\eqref{eq:surrogate_def}.

\textit{Expert-side.} The expert latency is the maximum over the $\kappa$ active experts, $\max_{k\in\mathcal{A}(\mathbf{s})}c_k(P_k)$ with $c_k(P_k)=T_k^{\mathsf{cmm}}+\lffn(\Ps-P_k)^{-1/3}$. Approximating this maximum by the average over the active experts and taking expectation,
\begin{equation*}
\E_{\mathbf{s}}\Bigl[\tfrac{1}{\kappa}\textstyle\sum_{k\in\mathcal{A}(\mathbf{s})}c_k(P_k)\Bigr]
=\frac{1}{\kappa}\sum_{k=1}^{K}\alpha_k\,c_k(P_k),
\end{equation*}
where we used $\Pr\{k\in\mathcal{A}(\mathbf{s})\}=\alpha_k$. This is the expert-side term of~\eqref{eq:surrogate_def}, completing the construction.

\subsection{Proof of \Cref{thm:bi_water_filling}}
\label{app:proof_bi_water_filling}
\label{app:proof_scalar_threshold}

The bidirectional surrogate $\Tsur(\mathbf{p})$ is strictly convex on the
feasible set of~$(\mathcal{P}_2)$, so its KKT point is the unique minimizer.
Define the mean gateway compute power induced by $\mathbf{p}$ as
\begin{equation}
\bar P_0^{\mathsf{cmp}}(\mathbf{p})
\triangleq
\Ps+\sum_{j=1}^{K}
\bigl[(1-\alpha_j)\eta_j\Ps+\alpha_j g_j(P_j)\bigr].
\label{eq:proof_bi_mean_gateway_power}
\end{equation}
At an interior point, stationarity in coordinate~$P_k$ gives
\begin{equation}
\latt\,g_k'(P_k)
\bigl(\bar P_0^{\mathsf{cmp}}(\mathbf{p})\bigr)^{-\frac{4}{3}}
=
\frac{\lffn}{\kappa}(\Ps-P_k)^{-\frac{4}{3}},
\label{eq:proof_bi_stationarity}
\end{equation}
where the common factor $\alpha_k>0$ cancels from both sides.

Let $\bar P_0^{\mathsf{cmp}\star}\triangleq\bar P_0^{\mathsf{cmp}}(\mathbf p^\star)$ and define $\nu^\star\triangleq\Ps/\bar P_0^{\mathsf{cmp}\star}$.
For a donor, $g_k'=\eta_k$, and~\eqref{eq:proof_bi_stationarity} gives
\begin{equation*}
P_k^\star=\frac{\Ps}{\nu^\star}\bigl(\nu^\star-\chi\eta_k^{-3/4}\bigr).
\end{equation*}

For an interior receiver, $g_k'=1/\eta_k$, which instead gives
\begin{equation*}
P_k^\star=-\frac{\Ps}{\nu^\star}\bigl(\chi\eta_k^{3/4}-\nu^\star\bigr).
\end{equation*}

At $P_k^{\star}=0$, the left and right derivatives straddle zero if and
only if
\begin{equation*}
\chi\eta_k^{3/4}\le\nu^{\star}\le\chi\eta_k^{-3/4}.
\end{equation*}

Finally, the lower bound $P_k\ge-\eta_k\Ps$ clips the receiver relation at
$-\eta_k\Ps$.  Combining the donor, neutral, interior-receiver, and
capped-receiver conditions yields~\eqref{eq:bi_pk_star}.

Substituting~\eqref{eq:bi_pk_star} into~\eqref{eq:proof_bi_mean_gateway_power} and dividing by $\bar P_0^{\mathsf{cmp}\star}=\Ps/\nu^{\star}$ gives~\eqref{eq:bi_water_level_balance}. Conversely, any positive solution of this balance equation together with~\eqref{eq:bi_pk_star} satisfies the KKT conditions, including the kink and receive-cap conditions. Existence follows from the optimum of the strictly convex problem, and strict convexity permits only one such solution. Hence both $\nu^{\star}$ and $\mathbf{p}^{\star}$ are unique.\qed

\subsection{Proof of \Cref{cor:near_opt}}
\label{app:near_opt}

For any feasible $\mathbf p$, Jensen's inequality shows that the certainty-equivalent gateway latency is no larger than the corresponding average latency. Moreover, the average latency of the $\kappa$ active experts is no larger than their maximum latency. Therefore,
\begin{equation}
    \Tsur(\mathbf p)\le \bar T(\mathbf p).
    \label{eq:proof_surrogate_lower_bound}
\end{equation}

Let $\eta_{[1]}\ge\cdots\ge\eta_{[K]}$ denote the ordered WPT
efficiencies. The reduced-problem analysis and performance guarantee assume
\begin{equation}
\delta\triangleq1-\kappa+\sum_{k=1}^{K}\eta_k
-\sum_{i=1}^{\kappa}\eta_{[i]}>0
\label{eq:near_opt_power_condition}
\end{equation}
Under the simplified control, this ensures positive gateway power in every state. In fact, $-\Ps\le g_k(P_k)<\eta_k\Ps$, and exactly $\kappa$ experts are active. It follows that
\begin{equation}
P_0^{\mathsf{cmp}}(\mathbf p;\mathbf s)
\ge\Ps\delta>0.
\label{eq:proof_gateway_power_lower}
\end{equation}
Hence, the reduced problems need no additional per-state power constraints.

We next bound the expert compute powers at $\mathbf p^\star$. Let $A\triangleq1+\sum_{j=1}^{K}(1-\alpha_j)\eta_j$.
If $P_k^\star\le0$, then $\Ps-P_k^\star\ge\Ps$. If
$P_k^\star>0$, the directional consistency in
Corollary~\ref{cor:bi_directional_consistency} eliminates the reception
term in~\eqref{eq:bi_water_level_balance}, yielding  $\nu^\star\le A^{-1}.$
Combining this result with the threshold conditions in
\eqref{eq:bi_threshold_conditions} and the receive cap gives
\begin{equation}
\Ps\min\{1,\chi A\eta_k^{-3/4}\}\le\Ps-P_k^\star\le(1+\eta_k)\Ps.
\label{eq:proof_expert_power_bounds}
\end{equation}

Let $\eta_{\max}\triangleq\max_k\eta_k$. The gateway power varies by at
most $\kappa(1+\eta_{\max})\Ps$ over the activation states. Its variance
is thus at most one quarter of the squared range. Applying Taylor's
theorem to $x^{-1/3}$ around the mean gateway power and using
\eqref{eq:proof_gateway_power_lower} bounds the gateway-side
approximation gap.

For the expert-side gap, define
$T_{\max}^{\mathsf{cmm}}\triangleq\max_kT_k^{\mathsf{cmm}}$ and
$T_{\min}^{\mathsf{cmm}}\triangleq\min_kT_k^{\mathsf{cmm}}$.
The difference between the maximum and average latency of any
$\kappa$ active experts is at most $(\kappa-1)/\kappa$ times the global
expert-latency range. Applying~\eqref{eq:proof_expert_power_bounds} to
this range and combining the two approximation gaps gives
\begin{equation}
\bar T(\mathbf p^\star)-\Tsur(\mathbf p^\star)\le\varepsilon_0,
\label{eq:proof_surrogate_gap}
\end{equation}
where
\begin{equation}
\begin{aligned}
\varepsilon_0
&\triangleq
\frac{\latt\kappa^2(1+\eta_{\max})^2}{18\Ps^{1/3}\delta^{7/3}}\\
&\quad+\frac{\kappa-1}{\kappa}
\Bigg[
T_{\max}^{\mathsf{cmm}}-T_{\min}^{\mathsf{cmm}}\\
&\qquad+\lffn\Big(
(P_{\mathrm{e,min}}^{\mathsf{cmp}})^{-1/3}
-[(1+\eta_{\max})\Ps]^{-1/3}
\Big)\Bigg],
\end{aligned}
\label{eq:near_opt_constant}
\end{equation}
where $P_{\mathrm{e,min}}^{\mathsf{cmp}}$ is given by
\begin{equation*}
P_{\mathrm{e,min}}^{\mathsf{cmp}}\triangleq\Ps\min\{1,\chi A\eta_{\max}^{-3/4}\}.
\end{equation*}

Finally, let $\widehat{\mathbf p}$ minimize the original average latency
$\bar T(\mathbf p)$. Since $\mathbf p^\star$ minimizes the surrogate
objective, \eqref{eq:proof_surrogate_lower_bound} and
\eqref{eq:proof_surrogate_gap} give
\begin{equation*}
\begin{aligned}
\bar T(\mathbf p^\star)
&\le \Tsur(\mathbf p^\star)+\varepsilon_0
\le \Tsur(\widehat{\mathbf p})+\varepsilon_0\\
&\le \bar T(\widehat{\mathbf p})+\varepsilon_0.
\end{aligned}
\end{equation*}
Together with
$\bar T(\widehat{\mathbf p})\le\bar T(\mathbf p^\star)$,
this proves~\eqref{eq:near_opt}.
\qed

\endgroup

\bibliographystyle{IEEEtran}
\bibliography{refs}

\end{document}